\documentclass[final]{article}

\pdfoutput=1
\usepackage{fullpage}
\usepackage{lmodern}
\usepackage{authblk}

\usepackage[utf8]{inputenc}
\usepackage[colorlinks,citecolor=blue,linktocpage,breaklinks,hypertexnames=false,pdfpagelabels,draft=false]{hyperref}
\usepackage{microtype}

\usepackage[obeyDraft]{todonotes}
\usepackage[intlimits]{amsmath}
\usepackage{amssymb, amsfonts, amsthm}
\usepackage{enumerate}
\usepackage{quantum}

\theoremstyle{plain}
\newtheorem{lemma}{Lemma}
\newtheorem{thm}[lemma]{Theorem}
\newtheorem{prop}[lemma]{Proposition}
\newtheorem{cor}[lemma]{Corollary}
\newtheorem{definition}[lemma]{Definition}
\theoremstyle{remark}
\newtheorem{obs}[lemma]{Observation}

\newcommand{\norm}[1]{\left\lVert#1\right\rVert}
\newcommand{\du}[1]{#1^*}
\newcommand{\de}[1]{\ensuremath{\operatorname{d}\!{#1}}}
\DeclareMathOperator{\poly}{poly}
\DeclareMathOperator{\dist}{dist}
\DeclareMathOperator{\trace}{Tr}
\DeclareMathOperator{\diam}{diam}
\DeclareMathOperator{\supp}{supp}
\renewcommand{\epsilon}{\varepsilon}
\newcommand{\mcl}{\mathcal}
\newcommand{\mds}{\mathds}

\newcommand{\rhs}{r.h.s.\ }
\newcommand{\ie}{i.e.\ }

\usepackage[backend=bibtex,style=numeric-comp,doi=false,url=false,firstinits=true,maxbibnames=4]{biblatex}
\addbibresource{bibliography.bib}

\title{Area law for fixed points of rapidly mixing dissipative quantum systems}

\author[1,2]{Fernando G.\,S.\,L.\ Brand\~ao \thanks{\texttt{f.brandao@ucl.ac.uk}}}
\author[2,3]{Toby S.\ Cubitt \thanks{\texttt{tsc25@cam.ac.uk}}}
\author[4]{Angelo Lucia \thanks{\texttt{anlucia@ucm.es}}}
\author[7]{Spyridon Michalakis \thanks{\texttt{spiros@caltech.edu}}}
\author[4,5,6]{David Perez-Garcia \thanks{\texttt{dperezga@ucm.es}}}

\affil[1]{Quantum Architectures and Computation Group, Microsoft Research, \authorcr Redmond, WA, U.\,S.\,A. }
\affil[2]{Department of Computer Science, University College London, \authorcr Gower Street, London WC1E 6BT, United Kingdom}
\affil[3]{DAMTP, University of Cambridge, U.\,K.}
\affil[4]{Departamento de Análisis Matemático, Universidad Complutense de Madrid, Spain}
\affil[5]{IMI, Universidad Complutense de Madrid, Spain}
\affil[6]{ICMAT, C/ Nicolás Cabrera, Campus de Cantoblanco, 28049 Madrid}
\affil[7]{Institute for Quantum Information and Matter, Caltech, U.\,S.\,A. }

\date{\today}

\begin{document}
	\maketitle
	\begin{abstract}
		We prove an area law with a logarithmic correction for the mutual information for fixed points of local dissipative quantum system
		satisfying a rapid mixing condition, under either of the following assumptions:
		the fixed point is pure, or the system is frustration free.
	\end{abstract}

\section{Introduction}

One of the problems common to quantum information and condensed matter physics is understanding correlations and entanglement in many-body quantum states. This is motivated by the observation that many interesting states behave very differently from random states: while in the latter the entanglement entropy of a sub-region scales as the volume of the region, in the states that occur in quantum many-body systems this entanglement entropy is often seen to scale only with the boundary of the region. This surprising behaviour has been called the \emph{area law} (the terminology comes from 3D systems), and it is a well-studied conjecture that ground-states of (gapped) local Hamiltonians should satisfy an area law. Even in the case of critical systems and gapless Hamiltonians, evidence suggests that the groundstate still has a sub-volume growth of the entanglement, with a rate proportional to the boundary of the region times a logarithmic correction \cite{Masanes2009,1505.07106v1}\footnote{Some authors call such situation a \emph{violation} of the area law, while others speak of an area law with a logarithmic correction. We will use the second terminology, but we will use the term \emph{sub-volume} law for any faster growth rate of entanglement that is slower than the volume of the region.}.
While it has been formally proven only for 1D systems \cite{Hastings07,2013arXiv1301.1162A,PhysRevB.85.195145,Brandao2013, Brandao2015}, area laws have also been proven in specific cases in higher dimensions (harmonic lattice systems \cite{Plenio2005}, models satisfying Local Topological Quantum Order \cite{Spiros11}, perturbations of gapped Hamiltonians satisfying an area law \cite{1206.6900v2,1411.0680v1}, and with a logarithmic correction for fermionic systems \cite{Wolf2006,Gioev2006}), and are the subject of active research.

Recently, a different class of states arising in quantum many-body systems has been attracting attention in the quantum information literature: fixed points of (local) dissipative processes. More precisely, fixed points of semigroups of trace preserving, completely positive linear maps. The motivation is two-fold: on the one hand, such processes model most of the different types of noise that can be found in nature, and therefore provide a more realistic model for physical systems, since in practice no system will be completely isolated. On the other hand, proposals have been made to artificially engineer such dissipative interactions in order to have a determined quantum state as a fixed point,  effectively making them ``dissipative machines'' for producing useful/interesting quantum states \cite{Kraus08,verstraete09}. This dissipative state engineering has been experimentally shown to be a robust mechanism to maintain coherence \cite{PRL.107.080503,2010NatPh6.943B}.

A natural question then arises: is there an area law (either strict or with a logarithmic correction) in this context? Note that, since fixed points of dissipative evolutions are generically not pure, we must find another measure of entanglement or correlations, since local entropy is no longer a useful measure for mixed states (as the trivial example of the maximally mixed state shows). \cite{PhysRevLett.100.070502} proposed instead using the \emph{mutual information}, a measure of correlations between two parts of a quantum state. It has the advantage of coinciding with entanglement entropy for pure states, and it upper bounds operational measures of entanglement in the mixed-state case, such as the distillable entanglement \cite{bennett1996mixed}. In \cite{PhysRevLett.100.070502} it was also shown that thermal states of local Hamiltonians satisfy an area law for the mutual information. However, since thermal states do not cover all the possible fixed points of dissipative systems. Pinning down under which conditions an area law for the mutual information holds for general dissipative quantum many-body systems is thus an interesting open problem.

For Hamiltonian systems, the main assumption that is usually made is the presence of a spectral gap: a non-vanishing separation between the two lowest energy levels of the Hamiltonian. In the dissipative setting, instead of spectral assumptions, it is more natural to make assumptions on the speed of convergence of the dissipation towards its fixed point (a quantity that is not controlled by the spectrum alone \cite{spectralboundsquantum}), or equivalently on the so-called \emph{mixing time}. In this work we restrict to systems for which the mixing time scales logarithmically with the system size. In a previous paper \cite{ourselves} some of the authors showed that such systems -- which we called \emph{rapid mixing} -- are stable under local perturbations. Similarly to the gap of closed quantum systems, proving rapid mixing for dissipative systems is a daunting task. There are however some interesting key examples: state preparation of graph states \cite{Kastoryano12}, classical Glauber dynamics for the Ising model in 2D (in some range of parameters) \cite{martinelli-2d,Lubetzky-Sly}, among others.

Rapid mixing is implied in many cases by a well studied property of dissipative evolutions, namely the existence of a system-size independent Log-Sobolev constant for primitive reversible Liouvillians \cite{Gross-3,Gross-2,Gross-1,Quantum-Log-Sobolev,Yoshida2001223}. Under such assumptions, in \cite{Kastoryano-exp-decay} it was proven a bound on the mutual information of the form:
\begin{equation}
\label{eq:kastoryano-area-law}
 I(A:A^c) \le c \log \log \norm{\rho^{-1}} \abs{\partial A}.
\end{equation}
In the latter bound $\rho$ is the fixed point of the evolution, and therefore $\norm{\rho^{-1}}$ will usually depend at least exponentially on the total system size, and sometimes even worse. As recognized by the authors of \cite{Kastoryano-exp-decay}, this poses a serious problem in considering Eq.~\eqref{eq:kastoryano-area-law} a satisfactory area law. It indicates however that rapid mixing seems to be the required condition to have an area law in the dissipative setting. This is exactly what we prove in the present paper, with the following results:
\begin{enumerate}
	\item if the system satisfies rapid mixing and the fixed point is pure, then it satisfies an area law with a logarithmic correciton for the entanglement entropy;
	\item if the system satisfies rapid mixing and is frustration free, meaning that the local terms of the Liouvillian share a common steady state, then such fixed point satisfies an area law with a logarithmic correciton for the mutual information.
\end{enumerate}

Compared with \eqref{eq:kastoryano-area-law}, the bounds we obtain do not have any dependence on the total system size. Moreover we do not require primitivity or reversibility of the generators of the evolution, and we only require rapid mixing instead of a system-size independent Log-Sobolev constant (a strictly weaker assumption, since the Log-Sobolev constant is undefined for non-primitive Liouvillians).

It is known that there is a connection between area laws and decay of correlations \cite{Brandao2013,PhysRevLett.100.070502}. Therefore it does not come as a surprise that with the tools we have developed for proving the area law we can also prove a decay of correlations measured with the mutual information. It is worth noting that, with the results available in the literature and to the best of our knowledge, it is not possible to derive the area law from the type of decay of correlations we will show here.

As we have mentioned earlier, for groundstates of closed systems a logarithmic correction is usually considered a signature of a gapless Hamiltonian. For open systems and mixed states the situation is less clear: already in \cite{PhysRevLett.100.070502} it was shown that thermal states of local Hamiltonian satisfy an area law without a logarithmic correction irrespective of the gap of the Hamiltonian. For these states the bound we obtained is therefore not optimal. We do not know whether there exists systems which saturate our bound, or if instead the correction is only an artefact of the proof. If there exist systems that saturate our bound, it would then imply that their fixed point have a very interesting property: while still satisfying an exponential decay of correlations, they do not satisfy an area law without a logarithmic correction. Having an example of such state which can also be efficiently prepared with a dissipative process would be interesting on its own, as it could lead to new insight on the relationship between area laws and decay of correlations.

We conjecture that rapid mixing alone, without any additional assumptions, should imply an area law for mutual information, but we do not have a formal proof. The fact that we have two different proofs of an area law, requiring different extra assumptions (pure fixed point on the one hand, frustration freeness on the other) is strong evidence for this conjecture.

The paper is organized as follows. In section \ref{section:notation} we set up the problem and introduce the necessary notation and definitions. In section \ref{section:localization} we prove two lemmas regarding localization properties of the fixed point of the dissipative maps: Lemma~\ref{lemma:rapidmixing-localization} is based only on the rapid mixing assumption, and will be used in section~\ref{section:correlation-decay} to prove decay of correlations and in section~\ref{section:correlation-arealaw} to prove the area law for pure fixed points; Lemma \ref{lemma:localization} instead requires the extra assumption of frustration freeness, and will be used in section~\ref{section:correlation-arealaw} to prove the area law for the mutual information in the case of mixed fixed points.


\section{Setup and notation}
\label{section:notation}
Let $\mcl H_{AB} = \mcl H_{A} \otimes \mcl H_{B}$ be a finite dimensional complex Hilbert space,
representing a bipartite quantum system. We denote by $\mcl B(\mcl H_{AB})$ the space of bounded linear operators on $\mcl H_{AB}$.
 A \emph{state} is given by a positive semi-definite operator $\rho_{AB} \in \mcl B(\mcl H_{AB})$,
normalized to have trace equal to one. The \emph{reduced density matrix} of the subsystem A (resp. B) will be
denoted by $\rho_A$ (resp. $\rho_B$), and it is given by $\rho_A = \tr_A \rho_{AB}$ (resp. $\rho_B = \tr_B \rho_{AB}$),
where the partial trace $\tr_A$ is defined to be the unique linear operator $\tr_A : \mcl B(\mcl H_{AB}) \to \mcl B(\mcl H_{B})$
such that $\tr_A(x\otimes y) = y \tr(x)$ for all $x$ in $\mcl B(\mcl H_{A})$ and all $y$ in $\mcl B(\mcl H_{B})$ 
($\tr_B$ is similarly defined).

We will use the standard Dirac notation for Hilbert spaces, denoting vectors as 
$\ket{\phi}$, adjoint vectors as  $\bra{\phi}$, $\braket{\phi}{\psi}$ for the scalar product,
and $\ketbra{\phi}{\psi}$ for rank one linear maps. The canonical basis will be indexed by natural numbers starting from zero:
$\ket{0}, \ket{1}, \dots \ket{n}$.

We will denote by $\poly(x)$ any polynomial in the variable $x$ with real coefficients and arbitrary degree.

\subsection{Measures of correlations}
Given a state $\rho_{AB} \in \mcl B(\mcl H_{AB})$ of a bipartite system, there are a number of possible measures of how ``distant'' the state $\rho_{AB}$
is from being a product state, \ie of the form $x \otimes y$ for some $x$ in $\mcl B(\mcl H_A)$ and $y$ in $\mcl B(\mcl H_B)$.
Since a product state represents a system in which measurements over the subsystem $A$ are independent of measurements over the subsystem $B$, we will talk of \emph{correlation measures} between subsystem $A$ and $B$. We will need to define and use three of such measures.
We will follow the same terminology of \cite{Kastoryano-exp-decay}.

\begin{definition}[Correlation measures]\hfill\\
\label{def:correlations}
  \begin{itemize}
    \item \emph{Covariance correlation}:
      \begin{multline*}
      C(A : B)  
      = \max_{\substack{M \in \mcl B(\mcl H_A),N \in \mcl B(\mcl H_B)\\ \norm{M} \le 1, \norm{N} \le 1}}
      \abs{ \expect{M\otimes N} - \expect{M}\expect{N} } \\
        = \max_{\substack{M \in \mcl B(\mcl H_A),N \in \mcl B(\mcl H_B)\\ \norm{M} \le 1, \norm{N} \le 1}}
      \abs{ \trace \left[ M \otimes N (\rho_{AB} - \rho_A \otimes \rho_B)\right] } ;
    \end{multline*}
    where $\expect{O} = \trace (O \rho_{AB})$ is the expectation value of the observable $O$ acting on $\rho_{AB}$.

    \item \emph{Trace distance correlation}:
      \begin{align*}
        T(A : B) &= \max_{\substack{F \in \mcl B(\mcl H_{AB})\\ \norm{F} \le 1}}
        \abs{ \trace \left[F (\rho_{AB} - \rho_A \otimes \rho_B)\right] } \\
        &= \norm{\rho_{AB} - \rho_A \otimes \rho_B}_1 .
      \end{align*}
    \item \emph{Mutual information correlation}:
      \[ I(A : B) = S(\rho_A) + S(\rho_B) - S(\rho_{AB}) ;\]
      where $S(\rho) = - \trace (\rho \log_2 \rho)$ is the von Neumann entropy of the state $\rho$.
    \end{itemize}
\end{definition}
When it is not be clear from context which state $\rho_{AB}$ we are considering, we indicate it in a subscript to avoid ambiguity, and write $C(A:B)_{\rho}$, $T(A:B)_{\rho}$ and $I(A:B)_{\rho}$.

As should be clear from the definition, $C(A:B)$ is always upper bounded by $T(A:B)$. Moreover, by Pinsker’s inequality \cite{nielsen-chuang}:
\[ C(A : B) \le T(A : B) \le 2 \sqrt{ I(A : B) } .\]

Therefore, mutual information is the strongest correlation measure. It is also a well known consequence of the Alicki-Fannes-Audenaert inequalities \cite{2007JPhA...40.8127A, MR0345574,MR2043448} that there is a non-linear inverse relationship between trace distance and mutual information: there is a differentiable function $f(x)$ vanishing at zero such that $I(A:B) \le f(T(A:B))$. Such non-linear equivalence between the two measures will allow us to take bounds on $T(A:B)$ (which in the context of our assumptions will be easier to deduce) and obtain information on the behavior of $I(A:B)$.

We state this result in a form that will be more convenient for us and, for the sake of completeness, present a short proof of it.

\label{sec:appendix-mutualinfo}
For $x \in [0,1]$, $h_b(x) = -x \log_2 x - (1-x) \log_2(1-x)$ denotes the binary entropy function.

\begin{thm}
The following inequalities hold:
\begin{description}
\item
  [Fannes-Audenaert \cite{2007JPhA...40.8127A, MR0345574} ]\hfill\\
  Let $\rho, \sigma \in \mcl B(\CC^d)$, and let $\delta = \norm{\rho-\sigma}_1 < 1$. Then
  \begin{equation}
    \label{eq:fannes-audenaert}
    \Abs{ S(\rho) - S(\sigma) } \le 2 \delta \log_2(d-1) + 2 h_b(\delta),
  \end{equation}
\item
  [Alicki-Fannes \cite{MR2043448}]\hfill\\
  Let $\rho^{AB}, \sigma^{AB} \in \mcl B(\CC^{d_A}\otimes \CC^{d_B})$, and let $\delta = \norm{\rho^{AB} - \sigma^{AB}}_1 < 1$. Then
  \begin{equation}
    \label{eq:alicki-fannes}
    \Abs{ S( \rho^{AB} | \rho^B) - S( \sigma^{AB} | \sigma^B ) } \le 4 \delta \log_2 d_A + 2 h_b(\delta),
  \end{equation}
  where the conditional Von Neumann entropy $ S( \rho^{AB} | \rho^B)$ is defined as
  $ S( \rho^{AB} | \rho^B) = S(\rho^{AB}) - S(\rho^B)$.
\end{description}
\end{thm}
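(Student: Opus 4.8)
The plan is to treat the two inequalities separately, reducing each to a scalar fact about the function $\eta(x) = -x\log_2 x$ combined with a convexity argument.

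For Fannes--Audenaert I would first pass from operators to spectra. Let $r_1 \ge r_2 \ge \dots \ge r_d$ and $s_1 \ge s_2 \ge \dots \ge s_d$ be the decreasingly ordered eigenvalues of $\rho$ and $\sigma$. A standard majorization inequality for unitarily invariant norms (Mirsky / Lidskii--Wielandt) gives $\sum_i |r_i - s_i| \le \norm{\rho - \sigma}_1 = \delta$, and since $S(\rho) = \sum_i \eta(r_i)$ is a spectral function it suffices to bound $|\sum_i \eta(r_i) - \sum_i \eta(s_i)|$ subject to $\sum_i |r_i - s_i| \le \delta$. Because $\sum_i r_i = \sum_i s_i = 1$, the total positive deviation equals the total negative deviation, each being at most $\delta/2 < 1/2$; in particular every gap $|r_i - s_i|$ is strictly below $1/2$. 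The key scalar input is then the elementary continuity estimate $|\eta(x) - \eta(y)| \le \eta(|x-y|)$, valid for $|x-y| \le 1/2$, which follows from concavity of $\eta$ together with $\eta(0)=0$. Summing this over $i$ and controlling $\sum_i \eta(|r_i - s_i|)$ by the log-sum (Jensen) inequality --- the extremum being a uniform spread of the total deviation over the at most $d-1$ indices on which the two spectra differ in a fixed direction --- produces a bound of the shape $\delta \log_2(d-1) + h_b(\delta)$, comfortably within the stated constants.

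The conditional version is the more delicate one, and is where I expect the real work to lie. Applying the scalar bound termwise to $S(\rho^{AB})$, $S(\sigma^{AB})$, $S(\rho^B)$ and $S(\sigma^B)$ separately would only yield a $\log(d_A d_B)$ dependence, so a different device is needed to isolate $\log d_A$. I would use the classical Alicki--Fannes interpolation construction: split $\Delta = \rho^{AB} - \sigma^{AB}$ into its positive and negative parts $\Delta = \Delta_+ - \Delta_-$, note that $\trace \Delta = 0$ and $\norm{\Delta}_1 = \delta$ force $\trace \Delta_+ = \trace \Delta_- = \delta/2$, and set $\hat\Delta_\pm = (2/\delta)\,\Delta_\pm$. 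The point is that the normalization $\tilde\Theta$ of the single operator $\Theta = \rho^{AB} + \Delta_- = \sigma^{AB} + \Delta_+$ admits the two convex decompositions $\tilde\Theta = (1-\lambda)\rho^{AB} + \lambda\hat\Delta_-$ and $\tilde\Theta = (1-\lambda)\sigma^{AB} + \lambda\hat\Delta_+$ with $\lambda = \delta/(2+\delta)$.

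Feeding both decompositions into the conditional entropy $S(A|B) = S(AB) - S(B)$ then finishes the argument. Concavity of the von Neumann entropy, with concavity defect at most the binary mixing entropy $h_b(\lambda)$ on each of the systems $AB$ and $B$, implies that for a two-term mixture the conditional entropy is concave up to an additive error bounded by $h_b(\lambda)$ in absolute value. Applying this to both decompositions of $\tilde\Theta$ and subtracting, and using the Araki--Lieb confinement $S(A|B) \in [-\log_2 d_A, \log_2 d_A]$ to control the term $S(\hat\Delta_+ | B) - S(\hat\Delta_- | B)$, yields $(1-\lambda)\,(S(\rho^{AB}|\rho^B) - S(\sigma^{AB}|\sigma^B)) \le 2\lambda\log_2 d_A + 2h_b(\lambda)$; by symmetry the same bounds the absolute value, and rearranging with $\lambda$ and $h_b(\lambda)$ estimated in terms of $\delta$ gives the claim. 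The main obstacle is precisely this interpolation step together with the correct two-sided concavity-defect estimate for the conditional entropy --- once the right convex combinations are written down the rest is bookkeeping --- whereas in the first part the only point needing care is the behaviour of $\eta$ near the endpoints of $[0,1]$, where its derivative diverges.
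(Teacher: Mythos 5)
Your proposal is correct in substance, but it necessarily takes a different route from the paper, because the paper does not prove this theorem at all: both inequalities are imported from the literature (Fannes, Audenaert, and Alicki--Fannes), and the short proof the paper supplies right after is only for the corollary \eqref{eq:alicki-mutual}, which combines \eqref{eq:fannes-audenaert} and \eqref{eq:alicki-fannes} by a triangle inequality. What you reconstruct are essentially the original arguments from those references. Your Alicki--Fannes half is complete and correct: the interpolation state $\tilde\Theta$ with its two convex decompositions at $\lambda=\delta/(2+\delta)$, the two-sided concavity defect $h_b(\lambda)$ for the conditional entropy (concavity of $S$ on $AB$ and on $B$, plus the mixing bound $S(\sum_i p_i\tau_i)\le\sum_i p_iS(\tau_i)+H(p)$), and the Araki--Lieb confinement $\lvert S(A|B)\rvert\le\log_2 d_A$ give exactly $(1-\lambda)\bigl\lvert S(\rho^{AB}|\rho^B)-S(\sigma^{AB}|\sigma^B)\bigr\rvert\le 2\lambda\log_2 d_A+2h_b(\lambda)$; dividing by $1-\lambda$ yields $\delta\log_2 d_A+(2+\delta)\,h_b\bigl(\delta/(2+\delta)\bigr)$, which one checks is below the stated $4\delta\log_2 d_A+2h_b(\delta)$ for all $\delta<1$ (using $\log_2 d_A\ge1$).

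The Fannes--Audenaert half has one soft spot. The Jensen regrouping you describe does not produce ``a bound of the shape $\delta\log_2(d-1)+h_b(\delta)$'': spreading the positive and the negative deviations (mass $\delta/2$ each, each supported on at most $d-1$ indices) uniformly yields $\delta\log_2(d-1)+2\eta(\delta/2)=\delta\log_2(d-1)+\delta+\eta(\delta)$, where $\eta(x)=-x\log_2x$. Since $\delta\le\eta(\delta)\le h_b(\delta)$ for $\delta\le1/2$, this does land within the stated constants whenever $\delta\le1/2$ (and, for $d\ge3$, even for all $\delta<1$, absorbing the extra $\delta$ into $\delta\log_2(d-1)$); that covers every use the paper makes of the bound, where $\delta$ is exponentially small. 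But the genuinely sharp form $\tfrac{\delta}{2}\log_2(d-1)+h_b(\delta/2)$ is not reachable by this convexity argument --- the $(d-1)$/binary-entropy refinement is precisely Audenaert's contribution and requires his separate maximization --- and your claim of being ``comfortably within the stated constants'' for all $\delta<1$ fails for $d=2$ with $\delta$ near $1$ for a good reason: there the inequality as printed in the paper is itself false (take $\rho=\operatorname{diag}(0.995,0.005)$ and $\sigma$ maximally mixed, so $\delta=0.99$; then $\lvert S(\rho)-S(\sigma)\rvert\approx0.95$ while $2\delta\log_2(d-1)+2h_b(\delta)\approx0.16$, consistent with Audenaert's tight bound $h_b(\delta/2)\to1$). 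So: state the restriction $\delta\le1/2$ (or $d\ge3$) in the first part and your argument is sound; the defect in the corner $d=2$, $\delta\to1$ belongs to the statement, not to your method.
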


Combining the two previous inequalities, we will obtain the desired non-linear bound on $I(A:B)$.

\begin{cor}
  Let $\delta = \norm{\rho^{AB} - \sigma^{AB}}_1 < 1$. Then
  \begin{equation}
    \label{eq:alicki-mutual}
    \Abs{ I(A:B)_\rho - I(A:B)_\sigma } \le 6 \delta \log_2 d_A + 4 h_b(\delta) .
  \end{equation}
  In particular, if we take $\sigma^{AB} = \rho^A \otimes \rho^B$, we have that
  $ I(A:B)_\sigma = 0$, \mbox{$\delta = T(A:B)_\rho$}, and thus
  \begin{equation}
   \label{eq:fannes-mutual}
    I(A:B)_\rho \le 6 T(A:B)_\rho \log_2 d_A + 4 h_b( T(A:B)_\rho).
  \end{equation}
  \end{cor}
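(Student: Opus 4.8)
The plan is to derive the corollary directly from the two inequalities in the preceding theorem, applied to the pair of conditional entropies that make up the mutual information. The key observation is that the mutual information can be rewritten entirely in terms of conditional entropies: since $I(A:B) = S(\rho^A) + S(\rho^B) - S(\rho^{AB})$, and $S(\rho^{AB}) - S(\rho^B) = S(\rho^{AB}\mid\rho^B)$, we can regroup this as $I(A:B) = S(\rho^A) - S(\rho^{AB}\mid\rho^B)$. This is the decomposition that lets me feed one term into Fannes--Audenaert and the other into Alicki--Fannes.

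First I would write the difference of mutual informations as
\[
I(A:B)_\rho - I(A:B)_\sigma = \bigl[S(\rho^A) - S(\sigma^A)\bigr] - \bigl[S(\rho^{AB}\mid\rho^B) - S(\sigma^{AB}\mid\sigma^B)\bigr],
\]
and bound the absolute value by the sum of the absolute values of the two bracketed terms (triangle inequality). For the first term I apply the Fannes--Audenaert inequality \eqref{eq:fannes-audenaert} to the reduced states $\rho^A,\sigma^A$ on $\CC^{d_A}$; the crucial point is that the trace distance is monotone under the partial trace, so $\norm{\rho^A - \sigma^A}_1 \le \norm{\rho^{AB} - \sigma^{AB}}_1 = \delta$. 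Since both $x \mapsto x\log_2(d-1)$ and $h_b$ are monotone increasing on the relevant range (for $\delta$ small), this yields $\abs{S(\rho^A) - S(\sigma^A)} \le 2\delta\log_2(d_A - 1) + 2h_b(\delta) \le 2\delta\log_2 d_A + 2h_b(\delta)$. For the second term I apply the Alicki--Fannes inequality \eqref{eq:alicki-fannes} directly, giving $\abs{S(\rho^{AB}\mid\rho^B) - S(\sigma^{AB}\mid\sigma^B)} \le 4\delta\log_2 d_A + 2h_b(\delta)$.

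Summing the two bounds gives $\abs{I(A:B)_\rho - I(A:B)_\sigma} \le 6\delta\log_2 d_A + 4h_b(\delta)$, which is exactly \eqref{eq:alicki-mutual}. The specialization to \eqref{eq:fannes-mutual} is then immediate: taking $\sigma^{AB} = \rho^A\otimes\rho^B$ makes $\sigma$ a product state, so $S(\sigma^{AB}) = S(\sigma^A) + S(\sigma^B) = S(\rho^A) + S(\rho^B)$ and hence $I(A:B)_\sigma = 0$; moreover by definition $\delta = \norm{\rho^{AB} - \rho^A\otimes\rho^B}_1 = T(A:B)_\rho$, and substituting these into \eqref{eq:alicki-mutual} yields \eqref{eq:fannes-mutual}.

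The only genuinely delicate point is ensuring the monotonicity arguments hold: one must check that the hypothesis $\delta < 1$ is preserved under partial trace (it is, again by monotonicity of the trace norm) so that Fannes--Audenaert applies, and that the $\log_2(d_A - 1) \le \log_2 d_A$ replacement is harmless. I do not expect any real obstacle here; the work is purely in the correct regrouping of entropies and a careful but routine application of the triangle inequality together with trace-distance monotonicity under the partial trace.
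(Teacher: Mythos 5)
Your proof is correct and takes essentially the same route as the paper's: write $I(A:B)=S(\rho^A)-S(\rho^{AB}|\rho^B)$, split by the triangle inequality, and apply Fannes--Audenaert to the $A$-marginals (via monotonicity of the trace norm under partial trace) and Alicki--Fannes to the conditional entropies. You simply make explicit the regrouping and the harmless replacement $\log_2(d_A-1)\le\log_2 d_A$ that the paper's one-line proof leaves implicit.
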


\begin{proof}
  Note that $\norm{\rho^{AB} - \sigma^{AB}}_1 \le \delta \Rightarrow \norm{\rho^A - \sigma^A}_1 \le \delta$. Applying triangle inequality and equations \eqref{eq:fannes-audenaert} and \eqref{eq:alicki-fannes} gives
  \[ \Abs{ I(A:B)_\rho - I(A:B)_\sigma } \le
    \Abs{ S(\rho^A) - S(\sigma^A)} + \Abs{ S(\rho^{AB} | \rho^B) - S(\sigma^{AB} | \sigma^B )} .\]
\end{proof}


\subsection{Many-body quantum systems}

Let us now recall the standard definitions and the common notation for many-body quantum systems.

We will consider a quantum system defined on the square lattice $\Gamma = \ZZ^D$ equipped with the graph metric,
where at each site $x \in \Gamma$ we associate a finite-dimensional complex Hilbert space $\mcl H_x$.
We choose to work with a square lattice for simplicity of exposition, but the results presented can be generalized
straightforwardly to graphs with polynomial growth (\ie with balls size growing polynomially with the diameter).
The ball centered at $x$ of radius $r$ will be denoted by $b_x(r)$.
We will use the following convention: given a subset $A \subset \Lambda$, we will denote by
$A(s)$ the smallest disjoint union of balls containing $\{ x \in \Lambda | \dist(x,A) \le s \}$.
For each finite subset $\Lambda \subset \Gamma$ of the lattice, we will associate a Hilbert space
$\mcl H_\Lambda = \otimes_{x\in \Lambda} \mcl H_x$ and an algebra of observables $\mcl A_\Lambda = \mcl B(\mcl H_\Lambda)$.
We will equip $\mcl A_\Lambda$ with the Hilbert-Schmidt scalar product $\langle A, B\rangle = \trace (A^* B)$.

A linear map $\mathcal T: \mathcal A_\Lambda \to \mathcal A_\Lambda$ will be called a \emph{superoperator} \cite{Wolf11}
to stress the fact that it is an operator acting on operators. Its \emph{support} is defined to be
the minimal set $\Lambda^\prime \subseteq \Lambda$ such that $\mathcal T = \mathcal T^\prime \otimes \identity$, where $\mathcal T^\prime \in \mathcal B(\mathcal A_{\Lambda^\prime})$. Positivity is defined as usual for linear maps: $\mcl T$ is said to be positive if it maps positive operators to positive operators. $\mathcal T$ is called \emph{completely positive} if $\mathcal T \otimes \identity : \mathcal A_\Lambda \otimes M_n \to \mathcal A_\Lambda \otimes M_n$ is positive for all $n \ge 1$. Finally, we say that $\mathcal T$ is trace preserving if $\trace \mathcal T(\rho) =\trace \rho$ for all $\rho \in A_\Lambda$.

A \emph{dissipative evolution} for a quantum system is given by a one-parameter continuous semigroup of
completely positive and trace preserving superoperators $\{ T_t : \mcl A_\Lambda \to \mcl A_\Lambda \}_t$. If $\rho \in \mcl B(\mcl H_\Lambda)$ is the state of the system at time zero, then the evolution of $\rho$ at time $t \ge 0$ is given by $\rho(t) = T_t(\rho)$. The assumptions on $T_t$ guarantee that $\rho(t)$ is again a state, \ie a positive and trace one operator. This is usually called the Schrödinger picture.

We will make use of the following norm for superoperators:
\begin{equation}
   \norm{T}_{\diamond} = \sup_n \norm{T\otimes \identity_n}_{1\to 1} = \sup_n \sup_{ \substack{X \in \mathcal A_\Lambda \otimes M_n \\ X \neq 0 }} \frac{ \norm{T\otimes \identity_n(X)}_1}{\norm{X}_1} .
\end{equation}
Dissipative maps are contractive with respect to such norm, in the sense that $\norm{\mcl T}_\diamond \le 1$.

Given a semigroup of dissipative maps $\{T_t\}$, it has a generator $\mcl L:\mcl A_\Lambda \to \mcl A_\Lambda$ which satisfies
$ \frac{\de{}}{\de t} T_t(\rho) = \mcl L T_t(\rho)$.
Such superoperator is called a \emph{Lindbladian} or \emph{Liouvillian} (we will use the former). The assumptions made on $T_t$ force a particular structure on $\mcl L$, which is called the \emph{Lindblad form} \cite{Kossakowski,Lindblad}. A superoperator $\mcl L$ is said
to be in the Lindblad form if it can be written as
\begin{equation}
\mcl L(\rho) = i [\rho, H] + \sum_j L_j \rho \du L_j - \frac{1}{2} \{ \du L_j L_j, \rho \} ,
\end{equation}
where $H$ is an Hermitian operator, $(L_j)_j$ are arbitrary operators (called the Lindblad operators), $[\cdot,\cdot]$ denotes the commutator and $\{\cdot,\cdot\}$ the anti-commutator. We refer to \cite{Wolf11,breuer2002theory} for details on the theory of Lindblad operators.

As shown in \cite{Wolf11}, given a semigroup of dissipative maps $\{ T_t \}_t$ we can define a new map $T_\infty$ representing the ``infinite time limit'' of the evolution, or in other words the projector onto the space of fixed points of the evolution. $T_\infty$ is
again a completely positive, trace preserving superoperator, since it can be obtained as $\lim_{N \to \infty} \frac 1 N \sum_{n=0}^N T_{n}$.

\subsection{Uniform families}
Given a generator $\mcl L$, we can decompose it as a sum of \emph{local terms}, i.e. terms which are still of the Lindblad form but with controlled support:
\[ \mathcal L = \sum_{Z \subset \Lambda} \mathcal L_Z , \quad \supp \mathcal L_Z = Z, \quad \mcl L_Z \text{ is Lindbladian}.\]
When $\norm{\mcl L_Z}_\diamond$ is decaying with $\diam Z$, we will generically say that the evolution is \emph{local}. More stringent assumptions on the decay rate of the norms of the local generators will be required, and are formalized in Assumptions \eqref{eq:assumption-a1} and \eqref{eq:assumption-a2} in this section.

Since we are interested in dissipative evolutions defined on increasing sequences of lattices, and how their properties depend on the lattice size (often referred to as the \emph{system size}), we need to define a meaningful way of growing the evolutions with the lattice size, adding and modifying the necessary generator terms appropriately. As presented in \cite{ourselves}, the following definition of \emph{uniform families} of dissipative evolutions is one solution to this, which is general enough to cover a wide range of models and situations.

\begin{definition}\label{defn:boundary-condition}
  Given $\Lambda \subset \Gamma$, a \emph{boundary condition} for $\Lambda$ is given by a Lindbladian $\mathcal B^{\partial \Lambda} = \sum_{d\ge 1} \mcl B^{\partial \Lambda}_d$, where $\supp B^{\partial \Lambda}_d \subset \partial_d \Lambda :=  \{ x \in \Lambda \,|\, \dist(x,\Lambda^c) \le d \}$.
\end{definition}

\begin{definition}
  \label{def:uniform-family}
  A \emph{uniform family} of Lindbladians is given by the following:
  \begin{enumerate}[(i)]
  \item \textit{infinite Lindbladian}: a local Lindbladian $\mathcal M$ defined all of $\ZZ^D$: $\mcl M = \sum_{Z\subset \ZZ^D} \mcl M_Z$;
  \item \textit{boundary conditions}: a family of {\it boundary conditions} $\{ \mcl B^{\partial \Lambda} \}_{\Lambda}$, where $\Lambda = b_u(L)$, for each $u\in \ZZ^D$ and $L \ge 0$.
  \end{enumerate}
\end{definition}

\begin{definition}
  A local Lindbladian $\mcl L = \sum_{Z\subset \ZZ^D} \mcl L_Z$ is said to be \emph{translationally invariant} if $\mcl L_{Z+u} = \mcl L_{Z},\, \forall u \in \ZZ^D$.

  We say that a uniform family $\mcl L = \{ \mcl M, \mcl B\}$ is \emph{translationally invariant} if $\mathcal M$ is translationally invariant, and moreover $\mathcal B^{\partial b_u(L)}$ is independent of $u$.
\end{definition}

Given a uniform family $\mcl L = \{ \mcl M, \mcl B\}$, we fix the following notation for evolutions defined on $\Lambda = b_u(L) \subset \Gamma$:
\begin{align}
\mathcal L^\Lambda =  \sum_{Z \subset \Lambda} \mathcal M_Z &\quad \text{``open boundary'' evolution} ; \\
\mathcal L^{\overline \Lambda} = \mathcal L^\Lambda + \mathcal L^{\partial \Lambda} &\quad \text{``closed boundary'' evolution},
\end{align}
with the respective evolutions $T_t^\Lambda = \exp(t \mathcal L^\Lambda)$ and  $T_t^{\overline \Lambda} = \exp(t \mathcal L^{\overline \Lambda})$.

Until now we have made no specific assumption on the decay rate of the norms of the local generators. As mentioned above, in order to meaningfully talk about locality of the evolution, we need to impose that $\norm{\mcl L_Z}_\diamond$ is decaying with $\diam Z$. The rate at which such function decays classifies the system into one of the specific cases more usually considered in the literature: compactly supported (usually called finite-range interactions), exponentially decaying, super-polynomially decaying, power-law decaying, etc. We will take a more general approach, and will simply assume from now on that our family of Liouvillians satisfies the following assumptions.

\begin{definition}[Lieb-Robinson Assumptions]
\label{def:lr-assumptions}
There exists an \emph{increasing} function $\nu(r)$ satisfying $\nu(x+y)\le \nu(x) \nu(y)$,
such that the following conditions hold:
\begin{equation}
  \label{eq:assumption-a1}
  \tag{A-1}
  \sup_{x\in \Gamma} \sum_{Z \ni x} \norm{\mcl M_Z}_\diamond \abs{Z} \nu(\diam Z) \le v < \infty,
\end{equation}
\begin{equation}
  \label{eq:assumption-a2}
  \tag{A-2}
  \sup_{x\in \Gamma} \sup_{r} \nu(r) \sum_{d = r}^N \norm{\mcl B_d^{\partial B(x,N)}}_\diamond  \le \poly(N).
\end{equation}
\end{definition}

If $\norm{\mcl M_Z}_\diamond$ is exponentially decaying or is compactly supported, then one can take $\nu(r) = \exp(\mu  r)$ for some positive $\mu$. On the other hand, if $\norm{\mcl M_Z}_\diamond$ decays only polynomially, then we must take $\nu(r) = (1+r)^{\mu}$. In the latter case, the Lieb-Robinson bounds. only hold if $\mu$ is bigger than a constant depending on the geometrical dimension of the lattice $\Gamma$, \ie $D$ if $\Gamma = \ZZ^D$. The details of when we can apply Lieb-Robinson bounds in this case can be found in our previous work \cite{ourselves}.
From now on we will simply assume that $\nu(\cdot)$ decays sufficiently fast for the Lieb-Robinson bounds to apply.

\subsection{Frustration freeness}
The following definition is inspired by the analogous concept defined for closed systems and Hamiltonian dynamics. It captures the idea that a fixed point of a local evolution might or might not be \emph{locally steady}. The local dissipative terms could in general have a non-trivial action on the fixed point, so that it is only the sum of such local effects that adds up to zero and leaves the state invariant. Assuming that the system is frustration free means excluding such cases.
\begin{definition}
We say that a uniform family $\mcl L = \{ \mcl M, \mcl B\}$ satisfies \emph{frustration freeness} (or is \emph{frustration free}) if for all $\Lambda$ and all
fixed points $\rho_\infty$ of $T_t^{\bar \Lambda}$
\begin{equation}
  \mcl M_Z (\rho_\infty) = 0 \quad \forall Z \subset \Lambda.
\end{equation}
\end{definition}
\begin{obs}
  If $\mcl L = \{ \mcl M, \mcl B\}$ is frustration free, then each fixed point of $T_t^{\bar \Lambda}$ is also a fixed point of $T_t^{\Lambda}$.
\end{obs}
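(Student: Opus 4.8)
The plan is to translate the statement about the semigroups $T_t^{\bar\Lambda}$ and $T_t^\Lambda$ into equivalent statements about their generators, and then observe that the frustration-freeness hypothesis supplies exactly what is needed almost verbatim. First I would recall that, since we work in finite dimension, a state $\rho_\infty$ is a fixed point of the semigroup $T_t^{\bar\Lambda} = \exp(t\mcl L^{\bar\Lambda})$ for all $t \ge 0$ if and only if it is annihilated by the generator, i.e.\ $\mcl L^{\bar\Lambda}(\rho_\infty) = 0$; the same equivalence holds for $T_t^\Lambda$ and $\mcl L^\Lambda$. This reduces the whole claim to a single implication at the level of generators: assuming $\mcl L^{\bar\Lambda}(\rho_\infty) = 0$, show that $\mcl L^\Lambda(\rho_\infty) = 0$.

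The second step is to invoke frustration freeness directly. By definition, $\rho_\infty$ being a fixed point of $T_t^{\bar\Lambda}$ forces $\mcl M_Z(\rho_\infty) = 0$ for every $Z \subset \Lambda$. Since the open-boundary generator is by definition $\mcl L^\Lambda = \sum_{Z\subset\Lambda}\mcl M_Z$, summing these identities over $Z$ gives $\mcl L^\Lambda(\rho_\infty) = \sum_{Z\subset\Lambda}\mcl M_Z(\rho_\infty) = 0$. Together with the equivalence from the first step, this yields $T_t^\Lambda(\rho_\infty) = \rho_\infty$ for all $t$, which is the desired conclusion.

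There is essentially no obstacle here: the frustration-freeness definition was crafted precisely so that the local terms $\mcl M_Z$ individually annihilate the fixed point, and once that is in hand the sum defining $\mcl L^\Lambda$ collapses immediately. The only point requiring a word of care is the semigroup/generator equivalence used in the first step. In finite dimension this is standard: one direction follows by differentiating $T_t^{\bar\Lambda}(\rho_\infty) = \rho_\infty$ at $t=0$, and the converse from the series expansion of $\exp(t\mcl L^{\bar\Lambda})$, so that $\mcl L^{\bar\Lambda}(\rho_\infty) = 0$ makes every higher-order term vanish as well. I would state this equivalence explicitly once and then apply it on both sides, noting that the boundary term $\mcl L^{\partial\Lambda}$ never needs to be analyzed on its own.
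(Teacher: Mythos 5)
Your proof is correct and matches exactly the intended argument: the paper states this as an immediate observation (with no written proof) precisely because, as you show, frustration freeness gives $\mcl M_Z(\rho_\infty)=0$ for each $Z\subset\Lambda$, so $\mcl L^\Lambda(\rho_\infty)=\sum_{Z\subset\Lambda}\mcl M_Z(\rho_\infty)=0$, and the finite-dimensional equivalence between $\ker\mcl L^\Lambda$ and fixed points of $\exp(t\mcl L^\Lambda)$ finishes it. Your explicit attention to that semigroup/generator equivalence is the only step the paper leaves tacit, and it is handled correctly.
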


While not true in general, the frustration freeness condition is satisfied by a large class of interesting dissipative systems, as the following examples  show:
\begin{enumerate}
\item Dissipative state engineering procedures defined in \cite{verstraete09,Kraus08};
\item Locally reversible classical Markov chains;
\item Locally detailed balanced quantum Markov processes, and in particular Gibbs samplers for commuting Hamiltonians \cite{arxiv1409.3435}.

\end{enumerate}


\section{Localization results}

A well known property of many-body systems with local interactions, either dissipative or Hamiltonian, is the existence of a finite speed of propagation. This describes how the support of a localized observable spreads in time during the evolution: up to an exponentially small correction, the support spreads linearly with time. The finite velocity at which such linear growth occurs is often called a Lieb-Robinson velocity, or sometimes a group velocity. (It is a property of the model and not a consequence of some relativistic effect -- we are considering only non-relativistic models here.) While the original work focused on Hamiltonian systems and groups of automorphisms \cite{robinson1968,lieb1972}, the existence of such finite speed of propagation in the lattice has been generalized to dissipative evolutions \cite{Poulin10,Nachtergaele12}, and in \cite{ourselves} we showed that the definition of boundary condition we have given allows us to recover the same type of localization properties of the evolution.

Nonetheless, all the Lieb-Robinson localization bounds have a time-dependency, becoming worse as time increases, until the bound they provide becomes trivial and does not give any information at all about the properties of the fixed point. In the next section we want to produce results which might be interpreted as ``infinite time'' versions of Lieb-Robinson bounds. To do so, we will need to make an extra assumption on the evolution: we will assume that the convergence to the fixed point is fast, in the sense that scales logarithmically with the system size. This is formalized in the definition of \emph{rapid mixing} below. Such a definition can be in some cases relaxed to allow convergence which is only scaling sub-linearly with respect to system size. (We will not pursue such generalizations here, and instead refer to \cite{ourselves} for guidance on which changes are necessary to the results below.)

The localization lemma that we prove, Lemma~\ref{lemma:rapidmixing-localization}, will be only be sufficient to prove an area law for the pure fixed point case, not for the mutual information. Therefore, we also prove a stronger result, Lemma~\ref{lemma:localization}, for which we will need to add the extra hypothesis of frustration freeness.

\label{section:localization}
\subsection{Rapid mixing}
In this section, we want to briefly recall a result proven in \cite{ourselves}. We start by recalling the definition
of \emph{rapid mixing}.

\begin{definition}[Rapid mixing]
	\label{def:rapid-mixing}
	Let $\{T_t^\Lambda\}_\Lambda$ be a family of dissipative maps, we say it satisfies rapid mixing if
	there exist $c, \gamma, \delta >0$ such that
	\begin{equation}
		\label{eq:rapid-mixing}
		\sup_{\substack{\rho \ge 0 \\ \trace \rho\, =1 }} \norm{T^\Lambda_t(\rho)- T^\Lambda_{\infty} (\rho)}_1 \le c \abs{\Lambda}^\delta \ e^{-t \gamma} .
	\end{equation}
\end{definition}

In analogy to the spectral gap for Hamiltonians, proving that a family of Lindbladians is rapid mixing is not an easy task.
Nonetheless, there exists a large class of interesting systems for which we already have mixing time estimates that imply rapid mixing:
\begin{enumerate}
	\item (Trivially) non-interacting particle systems.
	\item Dissipative state engineering for graph states \cite{Kastoryano12}.
	\item Quantum and classical Markov processes satisfying a Log-Sobolev inequality \cite{Quantum-Log-Sobolev}.
			This includes in particular Glauber dynamics for the Ising model in 2D, either above the critical temperature or with non-zero magnetic field \cite{martinelli-2d}.
\end{enumerate}

\begin{lemma}
 	\label{lemma:rapidmixing-localization}
	Let $\mcl L = \{ \mcl M, \mcl B\}$ be a uniform family of dissipative evolutions that
	satisfies rapid mixing, and suppose each $T_t^{\bar \Lambda}$ has a unique fixed point and no other periodic points.
	Fix a $\Lambda$ and let $\rho_\infty$ be the unique fixed point of $T^{\bar \Lambda}_t$.
	Given $A \subset \Lambda$, for each $s \ge 0$ denote by $\rho^s_\infty$ the unique fixed point of $T_t^{\bar A(s)}$.

	 Then we have:
	\begin{equation}
	  \norm{ \trace_{A^c} ( \rho_\infty - \rho_\infty^s ) }_1 \le \abs{A}^\delta \Delta_0(s),
	\end{equation}
	for some fast-decaying function $\Delta_0(s)$ and some positive constant $\delta$.
\end{lemma}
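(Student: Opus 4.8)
The plan is to interpolate between the two fixed points using the dissipative dynamics supported on $A(s)$, and to control the two resulting errors separately by rapid mixing and by a Lieb--Robinson bound. Write $\mcl L_1 = \mcl L^{\bar\Lambda}$ and $\mcl L_2 = \mcl L^{\bar A(s)}$, viewed as superoperators on $\mcl A_\Lambda$ (the latter with support $A(s)$), and set $\sigma_t = e^{t\mcl L_2}(\rho_\infty) = (T_t^{\bar A(s)}\otimes\identity)(\rho_\infty)$. At $t=0$ this is $\sigma_0 = \rho_\infty$. Since $T_t^{\bar A(s)}$ has a unique fixed point, its infinite-time projector acts as $T_\infty^{\bar A(s)}(\tau) = \rho_\infty^s$ on every state $\tau$ of $\mcl H_{A(s)}$, so $\sigma_\infty = \rho_\infty^s \otimes \trace_{A(s)}\rho_\infty$ and therefore $\trace_{A^c}\sigma_\infty = \trace_{A(s)\setminus A}\rho_\infty^s = \trace_{A^c}\rho_\infty^s$. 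Hence, for any $t$,
\[
  \norm{\trace_{A^c}(\rho_\infty - \rho_\infty^s)}_1 \le \norm{\trace_{A^c}(\sigma_0 - \sigma_t)}_1 + \norm{\trace_{A^c}(\sigma_t - \sigma_\infty)}_1 ,
\]
and it remains to bound the two summands for a suitable choice of $t$.

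The second summand is the easy one. Tracing out does not increase the trace norm, so it is at most $\norm{e^{t\mcl L_2}(\rho_\infty) - T_\infty^{\bar A(s)}(\rho_\infty)}_1$, and applying the rapid-mixing bound \eqref{eq:rapid-mixing} to the evolution $T_t^{\bar A(s)}$ with initial state $\rho_\infty$ controls this by $c\,\abs{A(s)}^\delta e^{-\gamma t}$. Using $\abs{A(s)} \le \poly(s)\,\abs{A}$ I will absorb the geometric prefactor as $\poly(s)\,\abs{A}^\delta e^{-\gamma t}$.

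The first summand is the crux. Using that $\rho_\infty$ is a fixed point of $\mcl L_1$, Duhamel's formula gives
\[
  \sigma_t - \sigma_0 = e^{t\mcl L_2}(\rho_\infty) - e^{t\mcl L_1}(\rho_\infty) = \int_0^t e^{(t-u)\mcl L_2}\,(\mcl L_2 - \mcl L_1)\,e^{u\mcl L_1}(\rho_\infty)\, \de u = \int_0^t e^{(t-u)\mcl L_2}\,(\mcl L_2 - \mcl L_1)(\rho_\infty)\, \de u ,
\]
where the last equality uses $e^{u\mcl L_1}(\rho_\infty) = \rho_\infty$. The generator difference $\mcl L_2 - \mcl L_1$ is a finite sum of local Lindbladian terms $\mcl G_k$ --- the boundary term $\mcl B^{\partial A(s)}$, minus the bulk terms $\mcl M_Z$ with $Z \not\subset A(s)$, minus $\mcl B^{\partial\Lambda}$ --- each supported at distance $\gtrsim s$ from $A$, except for long-range tails whose norm is suppressed by \eqref{eq:assumption-a1} through the factor $\nu(\diam Z)$. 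To bound the contribution of a single $\mcl G_k$ at $A$, I pass to the Heisenberg picture: for any $O_A \in \mcl A_A$ with $\norm{O_A} \le 1$,
\[
  \trace\!\big( O_A\, e^{(t-u)\mcl L_2}\mcl G_k(\rho_\infty) \big) = \trace\!\big( \du{\mcl G_k}\big[(e^{(t-u)\mcl L_2})^{*}(O_A)\big]\, \rho_\infty \big) .
\]
The key point is that $\mcl G_k$ is a difference of trace-preserving generators, so $\du{\mcl G_k}(\identity) = 0$; hence $\du{\mcl G_k}$ annihilates any observable acting trivially on $\supp \mcl G_k$. By the Lieb--Robinson bound available under \eqref{eq:assumption-a1}--\eqref{eq:assumption-a2}, the Heisenberg-evolved observable $(e^{(t-u)\mcl L_2})^{*}(O_A)$ is, up to an error $\epsilon_{\mathrm{LR}}$ decaying in $\dist(\supp\mcl G_k, A) - v\,(t-u)$, approximable by an operator supported away from $\supp \mcl G_k$. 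Replacing the evolved observable by this approximation kills the term exactly, leaving $\abs{\trace( O_A\, e^{(t-u)\mcl L_2}\mcl G_k(\rho_\infty))} \le \norm{\mcl G_k}_\diamond\,\epsilon_{\mathrm{LR}}$.

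Summing over the terms $\mcl G_k$, of which there are at most $\poly(s)\,\abs{A}$, integrating over $u \in [0,t]$, and taking the supremum over $O_A$, bounds the first summand by $\poly(s)\,\abs{A}\,\tilde\epsilon_{\mathrm{LR}}(s,t)$, with $\tilde\epsilon_{\mathrm{LR}}$ decaying like $e^{-\mu(s - v t)}$ in the exponential case (and polynomially in the power-law case). Finally I will choose $t$ proportional to $s$ with slope small enough that $v t < s$, so that both the Lieb--Robinson tail and the rapid-mixing factor $e^{-\gamma t}$ decay exponentially in $s$; collecting the prefactors $\abs{A}^\delta$ and $\poly(s)\,\abs{A}$ into a single power $\abs{A}^\delta$ (enlarging $\delta$ if necessary) and the residual $s$-dependence into a fast-decaying $\Delta_0(s)$ gives the claim. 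The hard part will be the third step: carrying out the geometric bookkeeping of exactly which difference terms reach $A$ and with what norm, and combining the interaction-tail suppression from \eqref{eq:assumption-a1} with the Lieb--Robinson spreading of the evolved observable so that the two error mechanisms remain compatible under a single choice of $t$.
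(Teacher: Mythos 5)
Your argument is, in essence, the intended one: the paper does not prove this lemma at all, but recalls it from \cite{ourselves}, and the proof there follows exactly your interpolation scheme --- compare $\rho_\infty = T_t^{\bar\Lambda}(\rho_\infty)$ with $T_t^{\bar A(s)}(\rho_\infty)$ via Duhamel and a Heisenberg-picture Lieb--Robinson estimate (whose engine is precisely your observation that each difference term $\mcl G_k$ satisfies $\du{\mcl G_k}(\mathds 1)=0$, hence kills observables supported away from it), then let the localized semigroup relax and invoke rapid mixing \emph{only on $A(s)$}, finally choosing $t \propto s$ to balance the $e^{vt}\nu^{-1}(s)$ tail against $e^{-\gamma t}$. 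You correctly identified the one structural point that makes the lemma nontrivial: rapid mixing must never be applied on $\Lambda$ (which would give $\abs{\Lambda}^\delta$), and using $e^{u\mcl L_1}(\rho_\infty)=\rho_\infty$ in Duhamel is exactly how the global dynamics is used for free.

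One step, as written, is not licensed by the paper's definitions. You bound $\norm{\trace_{A^c}(\sigma_t-\sigma_\infty)}_1 \le \norm{e^{t\mcl L_2}(\rho_\infty)-T_\infty^{\bar A(s)}(\rho_\infty)}_1$ and then apply \eqref{eq:rapid-mixing}; but Definition~\ref{def:rapid-mixing} is a $1\to1$-norm statement for states of the region on which the semigroup acts, whereas $\rho_\infty$ lives on all of $\Lambda$, so you are implicitly using a stabilized (diamond-norm) version of rapid mixing. For differences of channels these norms are not equivalent uniformly in dimension --- the paper itself distinguishes $\norm{\cdot}_{1\to1}$ from $\norm{\cdot}_\diamond$ for exactly this reason --- and a dimension factor here would destroy the system-size independence of the bound. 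The fix stays inside your own argument: since you only need the reduction to $A \subset A(s)$, and $T_t^{\bar A(s)}$ acts trivially outside $A(s)$ and is trace preserving, the partial trace intertwines, $\trace_{A(s)^c}\circ\bigl(T_t^{\bar A(s)}\otimes \mathrm{id}\bigr) = T_t^{\bar A(s)}\circ \trace_{A(s)^c}$; so with $\bar\rho := \trace_{A(s)^c}(\rho_\infty)$, a state on $A(s)$, one has $\trace_{A^c}(\sigma_t-\sigma_\infty) = \trace_{A(s)\setminus A}\bigl(T_t^{\bar A(s)}(\bar\rho)-\rho^s_\infty\bigr)$ (using $T_\infty^{\bar A(s)}(\bar\rho)=\rho^s_\infty$, which needs the assumed uniqueness of the fixed point and absence of other periodic points), and \eqref{eq:rapid-mixing} applies verbatim, giving $c\,\abs{A(s)}^\delta e^{-\gamma t} \le c\,\poly(s)\abs{A}^\delta e^{-\gamma t}$. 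Two smaller caveats in your third step: the phrase ``at most $\poly(s)\abs{A}$ terms $\mcl G_k$'' is wrong as a count (there are order-$\abs{\Lambda}$ difference terms $\mcl M_Z$ with $Z \not\subset A(s)$); what is true, and what you need, is that the $\nu$-weighted sum of their contributions is bounded by $\poly(s)\abs{A}\,\nu^{-1}(s)$ independently of $\abs{\Lambda}$, via the summability in Assumption~\eqref{eq:assumption-a1}. Likewise the $\mcl B^{\partial\Lambda}$ piece is only harmless thanks to Assumption~\eqref{eq:assumption-a2} (or when $A$ is far from $\partial\Lambda$). Both fall under the bookkeeping you explicitly flagged, and they go through as in \cite{ourselves}.
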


The decay rate of $\Delta_0(s)$ is in the same class as $\nu^{-1}(s)$,
where $\nu(s)$ is defined by Assumptions~\eqref{eq:assumption-a1} and~\eqref{eq:assumption-a2}:
it is exponential if $\nu^{-1}(s)$ is exponential, polynomial if $\nu^{-1}(s)$ is polynomial. In the latter case the degree of the polynomial
controlling the decay is smaller than that of $\nu^{-1}(s)$, but the loss is independent of the system size -- again, this corresponds to requiring a sufficiently fast-decaying $\nu^{-1}(s)$.

\subsection{Localizing with frustration freeness}
In the following section, we want to show a property of the fixed points of a uniform family of Lindbladians
verifying frustration freeness. We want to study the behavior of a system when it is prepared and started in a state,
which is the fixed point of the same family but of a slightly smaller region.
A reasonable guess would be that frustration freeness implies that the evolution should be localized ``around the boundary'', 
and that for short times nothing at all would happen in the ``bulk'' (where the state is left invariant
by the local interaction terms, because of frustration freeness). This intuition is formalized in the following lemma

\begin{figure}
	\caption{The construction of the sets $B$ and $R$.}
	\label{fig:localization}
	\centering
	\includegraphics{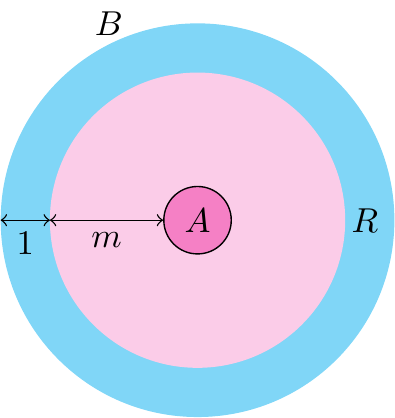}
\end{figure}

\begin{lemma}
	\label{lemma:localization}
	Let $ \mcl L = (\mcl M, \mcl B)$ an uniform family of Lindbladians, satisfying frustration freeness.
	Let $A \subset \Gamma$ be a finite region, and fix a positive natural number $m$. Let $B = A(m+1)$,
	$R = A(m+1)\setminus A(m)$ and
	$\rho_\infty^m$ a fixed point of $T_t^{\bar A(m)}$ and $\tau$ an arbitrary state on $R$ (see figure~\ref{fig:localization}).
	\begin{equation}
		\label{eq:localizing}
		  \norm{\left( T_t^{\bar B} - T_t^{ \bar B\setminus A} \right)(\rho_\infty^m \otimes \tau)}_1 \le
		  \poly(m) \nu^{-1}(m) \left[ e^{vt} -1 + t \right] ;
	\end{equation}
	where $T_t^{\bar B \setminus A}$ denotes the evolution
	generated by
	\[ \mcl L^{\bar B \setminus A} = \sum_{Z \subset {B\setminus A}} \mcl M_Z + \sum_{d \le m+1} \mcl B^{\partial B}_d .\]
\end{lemma}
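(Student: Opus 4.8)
The plan is to interpolate between the two evolutions with Duhamel's formula and then exploit frustration freeness to show that the generator of the difference acts trivially, up to a Lieb--Robinson error, on the initial state $\rho := \rho_\infty^m \otimes \tau$. Comparing the two generators termwise gives
\[
  \mcl K := \mcl L^{\bar B} - \mcl L^{\bar B\setminus A} = \sum_{\substack{Z\subset B\\ Z\cap A\neq\emptyset}}\mcl M_Z \;+\; \sum_{d>m+1}\mcl B_d^{\partial B},
\]
and the standard interpolation identity yields
\[
  T_t^{\bar B} - T_t^{\bar B\setminus A} = \int_0^t T_{t-s}^{\bar B\setminus A}\,\mcl K\,T_s^{\bar B}\,\de s .
\]
Since $T_{t-s}^{\bar B\setminus A}$ is a contraction in $\norm{\cdot}_\diamond$ and $T_s^{\bar B}(\rho)$ is a state, it suffices to bound $\int_0^t \norm{\mcl K\,T_s^{\bar B}(\rho)}_1\,\de s$.

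First I would split $\mcl K = \mcl K_1 + \mcl K_2$, where $\mcl K_1$ collects the local terms with $Z\cap A\neq\emptyset$ and $Z\subset A(m)$, and $\mcl K_2$ gathers the rest: local terms with $Z\cap A\neq\emptyset$ but $Z\not\subset A(m)$, together with the boundary terms of depth $d>m+1$. Every term of $\mcl K_2$ is either supported on a set reaching from $A$ to outside $A(m)$ (hence of diameter at least $m$) or is a deep boundary term, so Assumptions~\eqref{eq:assumption-a1} and~\eqref{eq:assumption-a2} bound its total diamond norm by $\norm{\mcl K_2}_\diamond \le \poly(m)\,\nu^{-1}(m)$. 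This part contributes at most $t\,\poly(m)\,\nu^{-1}(m)$ to the integral, producing the linear term in the claimed bound.

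The term $\mcl K_1$ is where frustration freeness enters. By the Observation following the definition of frustration freeness, $\rho_\infty^m$ is also a fixed point of the open evolution $T_s^{A(m)}$; since this evolution acts trivially on $R$, the map $T_s^{A(m)}\otimes\identity$ fixes $\rho=\rho_\infty^m\otimes\tau$. As each $\mcl M_Z$ in $\mcl K_1$ is supported in $A(m)$ and annihilates $\rho_\infty^m$, we have $\mcl K_1\,(T_s^{A(m)}\otimes\identity)(\rho)=0$, so I may replace $T_s^{\bar B}(\rho)$ by $\bigl(T_s^{\bar B}-T_s^{A(m)}\otimes\identity\bigr)(\rho)$. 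The two evolutions are generated by Lindbladians differing only by terms supported outside the interior of $A(m)$, i.e.\ at distance $\dist(Z,R)$ from a given $Z$; the dissipative Lieb--Robinson bound of \cite{ourselves} then gives
\[
  \norm{\mcl M_Z\bigl(T_s^{\bar B}-T_s^{A(m)}\otimes\identity\bigr)(\rho)}_1 \le C\,\norm{\mcl M_Z}_\diamond\,\nu^{-1}\bigl(\dist(Z,R)\bigr)\bigl(e^{vs}-1\bigr).
\]
Summing over $Z$ and using that a set touching $A$ with $Z\subset A(m)$ obeys $\diam Z + \dist(Z,R)\ge m$, submultiplicativity of $\nu$ yields $\nu^{-1}(\dist(Z,R))\le \nu(\diam Z)\,\nu^{-1}(m)$, after which $\sum_{Z}\norm{\mcl M_Z}_\diamond\,\nu(\diam Z) \le \abs{A}\,v$ by~\eqref{eq:assumption-a1}. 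Hence $\norm{\mcl K_1 T_s^{\bar B}(\rho)}_1 \le \poly(m)\,\nu^{-1}(m)(e^{vs}-1)$, which integrates to the $e^{vt}-1$ term.

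The main obstacle is the Lieb--Robinson step: one must show that the local probe $\mcl M_Z$ cannot distinguish $T_s^{\bar B}$ from the bulk-frozen evolution $T_s^{A(m)}\otimes\identity$ except through the far-away terms near $R$, with the correct $\nu^{-1}(\dist(Z,R))$ decay. The delicate point is that the terms of $\mcl K_1$ with large diameter reach close to $R$ (so the Lieb--Robinson factor is weak) yet are simultaneously of small norm; it is precisely the submultiplicativity trade-off $\nu(m)\le\nu(\diam Z)\,\nu(m-\diam Z)$ that balances these two effects and recovers the full $\nu^{-1}(m)$ rather than $\nu^{-1}(m/2)$. Combining the $\mcl K_1$ and $\mcl K_2$ contributions gives the stated bound $\poly(m)\,\nu^{-1}(m)\,[\,e^{vt}-1+t\,]$.
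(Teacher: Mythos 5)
Your argument is sound and reaches the stated bound, but it executes the central estimate by a different mechanism than the paper, so a comparison is worthwhile. The outer layer coincides: Duhamel's formula for $T_t^{\bar B}-T_t^{\bar B\setminus A}$, the split of the difference generator into bulk terms with $Z\cap A\neq\emptyset$, $Z\subset A(m)$ versus terms of diameter at least $m$ plus boundary terms of depth $d>m+1$, and the final submultiplicativity trade-off $\nu^{-1}(\dist(Z,R))\le\nu(\diam Z)\,\nu^{-1}(m)$ summed via Assumption~\eqref{eq:assumption-a1} are all exactly the paper's steps. Where you diverge is the core Lieb--Robinson input bounding $\norm{\mcl M_Z\, T_s^{\bar B}(\rho)}_1$. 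The paper proves a bespoke, state-dependent bound: it sets $C(Z,t)=\sup_{T\in\mds L_Z}\norm{T(\rho(t))}_1/\norm{T}_\diamond$, uses frustration freeness only through the initial condition $C(Z,0)=0$ for $Z\cap R=\emptyset$, and iterates the standard recursion, with the deep boundary terms $\mcl B_d^{\partial B}$ generating the separate coefficients $b_n$ and hence the linear-in-$t$ piece of $(e^{vt}+t-1)$. You instead inject frustration freeness via the invariance $(T_s^{A(m)}\otimes\identity)(\rho)=\rho$ (valid by the Observation following the frustration-freeness definition), rewrite $\mcl K_1 T_s^{\bar B}(\rho)=\mcl K_1\bigl(T_s^{\bar B}-T_s^{A(m)}\otimes\identity\bigr)(\rho)$, and invoke a generic two-generator comparison bound from \cite{ourselves}; this converts a state-dependent statement into a state-independent one, which is a legitimate and arguably more modular route. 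Two details need care to make it rigorous. First, the two generators differ not only in terms localized near $R$: the closed evolution contains boundary terms $\mcl B_d^{\partial B}$ of \emph{every} depth, and those with $d$ comparable to $\dist(Z,R)$ reach all the way to $Z$, so their contribution is suppressed not by locality but by their small summed norm via Assumption~\eqref{eq:assumption-a2}; this adds a term linear in $s$ to your quoted $(e^{vs}-1)$ — harmless in the end, since $s\le(e^{vs}-1)/v$, but it means you cannot cite a pure-locality comparison bound covering only the bulk terms. Second, your probe $\mcl M_Z$ is a superoperator, not an observable, while comparison theorems are stated in the Heisenberg picture; you should dualize, $\norm{\mcl M_Z\sigma}_1=\sup_{\norm{O}\le 1}\trace\bigl[(\mcl M_Z^* O)\,\sigma\bigr]$ with $\supp(\mcl M_Z^*O)\subseteq Z$ and $\norm{\mcl M_Z^*O}\le\norm{\mcl M_Z}_\diamond$, before applying the comparison (your displayed inequality also omits the $\abs{Z}$ and $\poly(m)$ factors carried by the paper's equation~\eqref{eq:lieb-rob}, though these are absorbed into the final $\poly(m)$). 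In short: the paper's self-contained recursion buys uniform, explicit handling of the boundary terms; your route buys a clean separation between frustration freeness (used once, through invariance) and propagation estimates (used as a black box), at the cost of having to verify that the cited comparison theorem really covers boundary conditions of all depths.
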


In order to prove such result, we will first prove a Lieb-Robinson-type of lemma. Denote by
$ \rho(t) = T_t^{\bar B}(\rho_\infty^m \otimes \tau)$.
For each $X \subset B$, denote $\mds L_X$ the algebra generated by $\{ \mcl M_Z \,|\, Z \subset X \}$, which is the set of interactions terms of $\mcl L^{B}$ whose support is contained in $X$.

\begin{lemma}
	Under the same assumptions of Lemma \ref{lemma:localization},
	for each $K \in \mathds L_X$, the following ``Lieb-Robinson-like'' bound holds
	for some positive $v$:
	\begin{equation}
		\label{eq:lieb-rob}
		 \norm{K(\rho(t))}_1 \le \poly(m) \abs{X}\norm{K}_\diamond (e^{vt}+t-1) \nu^{-1}(\dist(X,R)) .
	\end{equation}
\end{lemma}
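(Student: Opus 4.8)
The plan is to run a Lieb--Robinson-type estimate in the Schrödinger picture, using frustration freeness only to annihilate the initial data and then letting the interaction structure produce the spatial decay. First I would record the crucial consequence of the hypotheses: since $\dist(X,R)\ge 1$ forces $X\subset A(m)$ (because $X\subset B=A(m)\cup R$ and $X$ cannot meet $R$), every generator $\mcl M_Z$ with $Z\subset X$ annihilates $\rho_\infty^m$ by frustration freeness and acts trivially on $\tau$; hence each monomial in the generating set of $\mds L_X$ kills $\rho_\infty^m\otimes\tau$, and so $K(\rho(0))=K(\rho_\infty^m\otimes\tau)=0$. (When $\dist(X,R)=0$ the factor $\nu^{-1}(0)$ renders the claimed bound essentially trivial, so I would treat that case separately.)

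Next I would differentiate $g(t):=K(\rho(t))$. Using $\frac{\de{}}{\de t}\rho(t)=\mcl L^{\bar B}(\rho(t))$ and writing $K\mcl L^{\bar B}=\mcl L^{\bar B}K+[K,\mcl L^{\bar B}]$, one gets $\dot g(t)=\mcl L^{\bar B}(g(t))+[K,\mcl L^{\bar B}](\rho(t))$. Since $g(0)=0$, Duhamel's formula gives
\begin{equation*}
g(t)=\int_0^t T_{t-s}^{\bar B}\big([K,\mcl L^{\bar B}](\rho(s))\big)\,\de s ,
\end{equation*}
and contractivity $\norm{T_{t-s}^{\bar B}}_\diamond\le1$ yields $\norm{g(t)}_1\le\int_0^t\norm{[K,\mcl L^{\bar B}](\rho(s))}_1\,\de s$. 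Because superoperators with disjoint support commute, $[K,\mcl M_W]=0$ unless $W\cap X\neq\emptyset$, so only interactions touching $X$ (plus the boundary piece $[K,\mcl L^{\partial B}]$) survive; each surviving $[K,\mcl M_W]$ lies again in the algebra $\mds L_{X\cup W}$ and obeys $\norm{[K,\mcl M_W]}_\diamond\le 2\norm{K}_\diamond\norm{\mcl M_W}_\diamond$.

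The point of iterating is that these new superoperators are not yet applied to $\rho(0)$, so I would feed them back into the same identity, generating a Dyson series whose $n$-th term is a nested commutator of $K$ with interactions $\mcl M_{W_1},\dots,\mcl M_{W_n}$ (each touching the cluster built so far) applied to $\rho(0)$ and integrated over the simplex $0\le s_n\le\cdots\le s_1\le t$. Frustration freeness makes such a term vanish unless the accumulated support $X\cup W_1\cup\cdots\cup W_n$ finally meets $R$; since the $W_i$ form a connected chain anchored in $X$, reaching $R$ forces $\sum_i\diam W_i\ge\dist(X,R)$. Inserting the identity $1=\prod_i\nu(\diam W_i)\cdot\prod_i\nu(\diam W_i)^{-1}$ and using submultiplicativity $\nu(x+y)\le\nu(x)\nu(y)$ together with monotonicity gives $\prod_i\nu(\diam W_i)^{-1}\le\nu^{-1}(\dist(X,R))$, which I pull out as the global decay factor. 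The residual weights $\prod_i\norm{\mcl M_{W_i}}_\diamond\,\abs{W_i}\,\nu(\diam W_i)$, summed over all interactions touching the growing cluster, are bounded per site by $v$ through Assumption~\eqref{eq:assumption-a1} (the factor $\abs{W_i}$ absorbing the cluster-size growth); combined with the simplex volume $t^n/n!$ this sums over $n\ge1$ to $e^{vt}-1$, the leading $\abs{X}$ counting the sites of $X$ from which a chain can start and $\poly(m)$ absorbing geometric counting constants. The boundary interactions, controlled only polynomially by Assumption~\eqref{eq:assumption-a2} rather than exponentially, enter at most linearly and contribute the extra additive $t$, producing the stated $e^{vt}+t-1$.

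The main obstacle is the bookkeeping of this series: after extracting $\nu^{-1}(\dist(X,R))$ one must verify that the sums over the intermediate supports $W_i$ at each order stay uniformly summable, so that the leftover weights genuinely reorganize into $v^n t^n/n!$, and that the boundary terms, which lack the submultiplicative weight of \eqref{eq:assumption-a1}, can be isolated into the single additive $t$ without spoiling the distance decay. This is the familiar but delicate combinatorial core of any Lieb--Robinson estimate, here made more intricate by the coexistence of the exponentially-weighted bulk interactions~\eqref{eq:assumption-a1} and the merely polynomially-controlled boundary interactions~\eqref{eq:assumption-a2}.
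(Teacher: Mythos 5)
Your opening observations are sound and match the paper's: frustration freeness gives $K(\rho(0))=0$ whenever $X\cap R=\emptyset$ (the paper phrases this as $C(Z,0)=0$ for $Z\cap R=\emptyset$, where $C(Z,t)=\sup_{T\in\mds L_Z}\norm{T(\rho(t))}_1/\norm{T}_\diamond$), and your telescoping $\prod_i\nu(\diam W_i)^{-1}\le\nu^{-1}(\dist(X,R))$ is the same submultiplicativity trick the paper uses. The genuine gap is exactly the step you defer as ``bookkeeping'': the nested-commutator Dyson series you set up does not reorganize into $v^n t^n/n!$ with a linear prefactor $\abs{X}$ and an $\abs{X}$-independent exponent. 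In your expansion the $i$-th interaction $W_i$ may attach anywhere on the accumulated cluster $X\cup W_1\cup\dots\cup W_{i-1}$; the weight $\abs{W}$ in Assumption~\eqref{eq:assumption-a1} buys the choice of one attachment site inside $W$, once, but nothing pays for repeated attachments to $X$ or to an earlier $W_j$. Already the terms in which every $W_i$ touches $X$ contribute of order $(2v\abs{X}t)^{n}/n!$, resumming to $e^{c\abs{X}t}$-type growth, which is incompatible with \eqref{eq:lieb-rob}. The paper avoids this by never forming commutators: since $[K,\mcl L^{\bar B\setminus X}]=0$, Duhamel gives $K(\rho(t))=e^{t\mcl L^{\bar B\setminus X}}K(\rho(0))+\int_0^t e^{(t-s)\mcl L^{\bar B\setminus X}}K\tilde{\mcl L}^X\rho(s)\,\de s$, and then $K$ is peeled off entirely via $\norm{K\tilde{\mcl L}^X\rho(s)}_1\le\norm{K}_\diamond\norm{\tilde{\mcl L}^X\rho(s)}_1$, so the recursion closes on $C(Z,s)$ for the \emph{single} support $Z$ of the newly inserted term. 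The sums then run over linear chains $Z_1\cap X\neq\emptyset$, $Z_{i+1}\cap Z_i\neq\emptyset$, $Z_n\cap R\neq\emptyset$, each $\abs{Z_i}$ pays for exactly one attachment, and one gets $a_n\le v^n\sum_{i\in X}\nu^{-1}(\dist(i,R))$. This reduction step is the missing idea; without it your series, carried out honestly, yields at best a finite radius of convergence in $t$ or an exponent growing with $\abs{X}$.

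Two smaller inaccuracies. First, the boundary terms do not ``enter at most linearly'': in the iterated recursion a chain of bulk terms can terminate in a boundary term at any order, producing coefficients $b_{n+1}\le v^n\poly(m)\abs{X}\nu^{-1}(\dist(X,R))$ for every $n$; their distance decay comes from the weighted form of Assumption~\eqref{eq:assumption-a2}, namely $\sup_r\nu(r)\Delta(r)\le\poly(m)$ with $\Delta(r)=\sum_{d\ge r}\norm{\mcl B^{\partial B}_d}_\diamond$, and only after resumming do they reorganize into $\poly(m)\abs{X}\nu^{-1}(\dist(X,R))\left[(e^{vt}-1)+t\right]$. Second, your dismissal of the case $\dist(X,R)=0$ is backwards: $\nu^{-1}(0)$ is a constant (one may take $\nu(0)=1$), the right-hand side of \eqref{eq:lieb-rob} vanishes at $t=0$, and $K(\rho(0))\neq 0$ is possible when $X\cap R\neq\emptyset$, so the stated bound must be read under the hypothesis that frustration freeness kills the initial datum, \ie $X\cap R=\emptyset$ (the paper's recursion keeps track of the general case through the coefficient $a_0=C(X,0)$). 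Your overall strategy --- Schr\"odinger-picture Lieb-Robinson iteration, frustration freeness annihilating the initial state, \eqref{eq:assumption-a1} supplying the decay --- is the right one, but absent the peeling-off reduction the combinatorial core does not close.
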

\begin{proof}
Denote
$ C(Z,t) = \sup_{T \in \mathds L_Z} \frac{\norm{T(\rho(t))}_1}{\norm{T}_\diamond}$.
Frustration-freeness implies that $C(Z,0)$ is $0$ if $Z\cap R = \emptyset$ (since $\trace_R \rho(0) = \rho_\infty^m$),
 while is bounded by $1$ otherwise. Moreover, let $ \Delta(r) =  \sum_{d \ge r} \norm{\mcl B_d^{\partial B}}_\diamond $
 and for each $Z \subset B$, let $\delta(Z) = \Delta(\dist(Z,R))$.
 Assumption~\eqref{eq:assumption-a2} implies that $\sup_r \nu(r) \Delta(r) \le \poly(m)$.

We are now going to replicate the proof technique of Lieb-Robinson bounds: denote $ \tilde{\mcl L}^X = \mcl L^{\bar B} - \mcl L^{\bar B \setminus X}$, and notice that, since they have disjoint support, $[K, \mathcal L^{\bar B \setminus X} ] =0$. Then
\[
\frac{\de{}}{\de t} K(\rho(t)) =
K \mathcal L^{\bar B} \rho(t) =
\mathcal L^{\bar B \setminus X} K(\rho(t)) + K \tilde{\mathcal L}^X(\rho(t)) ,
\]
and consequently,
\[ K(\rho(t)) =  e^{t \mathcal L^{\bar B \setminus X}} K(\rho(0)) + \int_0^t  e^{(t-s) \mathcal L^{\bar B \setminus X}} K \tilde{\mathcal L}^X \rho(s) \de s .  \]
By taking norms
\[ \norm{K(\rho(t))}_1 \le \norm{K(\rho(0))}_1 + \norm{K}_\diamond \int_0^t \norm{\tilde{\mathcal L}^X \rho(s)}_1 \de s ; \]
and thus
\begin{align}
C(X,t) &\le
	C(X,0) + \sum_{Z\cap X \neq \emptyset} \int_0^t \norm{\mathcal M_Z}_\diamond C(Z,s) \de s
	+ \sum_{d > m+1} \int_0^t \norm{\mathcal B^{\partial B}_d}_\diamond \de s  \nonumber \\
&\le
	C(X,0) + \delta(X) t + \sum_{Z \cap X \neq \emptyset} \norm{\mathcal M_Z}_\diamond \int_0^t C(Z,s) \de s.
	\label{eq:lr-recursion-step}
\end{align}
By recursively applying equation~\eqref{eq:lr-recursion-step}, we obtain that
\[ C(X,t) \le \sum_{n=0}^\infty [a_n + b_n] \frac{t^n}{n!} \]
where $a_0 = C(X,0)$,
\[ a_n = \sum_{Z_1 \cap X \neq 0} \dots \sum_{Z_n \cap Z_{n-1} \neq 0} \norm{\mcl M_{Z_1}} \dots \norm{\mcl M_{Z_n}} C(Z_n,0) ,\]
$b_0 = 0$, $b_1 = \delta(X)$ and
\[ b_{n+1} = \sum_{Z_1 \cap X \neq 0} \dots \sum_{Z_{n} \cap Z_{n-1} \neq 0} \norm{\mcl M_{Z_1}} \dots \norm{\mcl M_{Z_n}} \delta(Z_{n}). \]
Let us bound the two coefficients independently. The coefficients $a_n$ are threated in the same way as is done in the standard proof of Lieb-Robinson bounds \cite{Hastings10}: recalling that $C(Z,0)$ is zero unless $Z\cap R\neq \emptyset$, we have that
\[ a_n = \sum_{Z_1 \cap X \neq 0} \dots \sum_{\substack{Z_n \cap Z_{n-1} \neq 0 \\ Z_n \cap R \neq 0}} \norm{\mcl M_{Z_1}} \dots \norm{\mcl M_{Z_n}} \]
We have that $a_1$ is bounded by using assumption~\eqref{eq:assumption-a1}:
\begin{multline*}
	a_1 \le \sum_{i \in X} \sum_{\substack{Z_1 \ni i \\ Z_1\cap R \neq 0}} \norm{\mcl M_{Z_1}} \le \\
	\sum_{i \in X} \nu^{-1}(\dist(i,R)) \sum_{\substack{Z_1 \ni i \\ Z_1\cap R \neq 0}} \norm{\mcl M_{Z_1}}  \nu(\diam(Z_1))
	\\ \le v \sum_{i \in X} \nu^{-1}(\dist(i,R)) .
\end{multline*}
Similarly, we bound $a_2$ as follows:
\[
	a_2 \le \sum_{i \in X} \sum_{Z_1 \ni i} \sum_{j \in Z_2} \sum_{\substack{Z_2 \ni j\\ Z\cap Y \neq 0 }} \norm{\mcl M_{Z_1}} \norm{\mcl M_{Z_2}}.
\]
We now use the fact that $\nu(\dist(i,j)) \nu(\dist(j,R)) \ge \nu(\dist(i,R))$, and thus
\[ \begin{split}
	a_2 \le \sum_{i \in X} \nu^{-1}(\dist(i,R)) \sum_{Z_1 \ni i} \norm{\mcl M_{Z_1}} \sum_{j \in Z_2} \nu(\dist(i,j))
		 \sum_{\substack{Z_2 \ni j\\ Z\cap Y \neq 0 }} \nu(\dist(j, Y)) \norm{\mcl M_{Z_2}} \\
	\le \sum_{i \in X} \nu^{-1}(\dist(i,R)) \sum_{Z_1 \ni i} \norm{\mcl M_{Z_1}} \nu(\diam(Z_1)) \sum_{j \in Z_1}
		\sum_{\substack{Z_2 \ni j\\ Z\cap Y \neq 0 }} \nu(\diam(Z_2)) \norm{\mcl M_{Z_2}} \\
	\le v \sum_{i \in X} \nu^{-1}(\dist(i,R)) \sum_{Z_1 \ni i} \norm{\mcl M_{Z_1}} \nu(\diam(Z_1)) \abs{Z_1} \\
	\le v^2 \sum_{i \in X} \nu^{-1}(\dist(i,R)).
\end{split} \]

Proceeding in a similar way, we can bound $a_n$ by $v^n \sum_{i \in X} \nu^{-1}(\dist(i,R))$. Let us now turn our attention to $b_n$.
Let $Z_1 \cap X \neq \emptyset$. Then for all $u$ in $Z_1$ it holds that
\[ \dist(X,R) \le \dist(u,X) + \dist(u,R) \le \diam Z_1 + \dist(u, R) .\]
In particular, this holds for $y_{Z_1} \in Z_1$ such that $\dist(y_{Z_1}, R) = \dist(Z_1, R)$. Therefore we have that
\[ 1 \le \nu(\diam Z_1) \nu (\dist(Z_1,R)) \nu^{-1}(\dist(X,R)) .\]
We can use the previous inequality to bound $b_2$ as follow:
\begin{multline*}
	b_2 = \sum_{Z_1 \cap X \neq \emptyset } \norm{M_{Z_1}} \delta (Z_1) \\
	\le \nu^{-1}(\dist(X,R)) \sum_{x \in X} \sum_{Z_1 \ni x} \norm{M_{Z_1}} \nu(\diam Z_1)
	{\Delta(\dist(Z_1,R))}{\nu(\dist(Z_1,R))}
	\\ \le v \poly(m) \abs{X} \nu^{-1}(\dist(X,R)).
 \end{multline*}
For $b_3$, we reason similarly as follows:
\begin{multline*}
	b_3 = \sum_{Z_1 \cap X \neq \emptyset} \sum_{Z_2 \cap Z_1 \neq \emptyset}  \norm{M_{Z_1}} \norm{M_{Z_2}} \delta (Z_2)
	\\ \le \sum_{Z_1 \cap X \neq \emptyset} \frac{\norm{M_{Z_1}}}{\nu(\dist(Z_1, R))}
		\sum_{z \in Z_1} \sum_{Z_2 \ni z} \norm{M_{Z_2}} \nu(\diam Z_2) {\Delta(\dist(Z_2,R))}{\nu(\dist(Z_2,R))}
	\\ \le v \poly(m) \sum_{Z_1 \cap X \neq \emptyset} \norm{M_{Z_1}} \nu^{-1}(\dist(Z_1, R)) \abs{Z_1}
	\\ \le v \poly(m) \nu^{-1}(\dist(X,R)) \sum_{x \in X} \sum_{Z_1 \ni x} \norm{M_{Z_1}} \abs{Z_1} \nu(\diam Z_1)
	\\ \le v^2 \poly(m) \nu^{-1}(\dist(X,R)) \abs{X}.
\end{multline*}
Following the same argument, we can thus bound the general term $b_{n+1}$ by
\[ v^n \poly(m) \abs{X} \nu^{-1}(\dist(X,R)).\]

We can then bound
\[ \sum_{n} a_n \frac{t^n}{n!} \le (e^{vt}-1) \sum_{i \in X} \nu^{-1}(\dist(i,R)) \le (e^{vt}-1) \abs{X} \nu^{-1}(\dist(X,R))  \]
and
\[ \sum_{n} b_n \frac{t^n}{n!} \le \poly(m) v^{-1} (e^{vt}-1) \abs{X}\nu^{-1}(\dist(X,R)) + t [\delta(X) - \poly(m)\abs{X}\nu^{-1}(\dist(X,R)) ]. \]
Note that, because of Assumption~\eqref{eq:assumption-a2}, the last term in the \rhs can be bounded as
\[ \delta(X) - \poly(m)\abs{X}\nu^{-1}(\dist(X,R)) \le \poly(m)\abs{X}\nu^{-1}(\dist(X,R)) \]
This concludes the proof.
\end{proof}

We can now prove equation~\eqref{eq:localizing}.
\begin{proof}
Applying Duhamel's formula \cite{tao2006nonlinear} we have that
\[ \left( T_t^{ \bar B} - T_t^{ \bar B\setminus A} \right)(\rho_\infty^m \otimes \tau) =
\int_0^t T_{t-s}^{ \bar B \setminus A} \tilde{\mathcal L}^A \rho(s) \de s ;
\]
and therefore:
\[ \norm{ \left( T_t^{\bar B} - T_t^{ \bar B \setminus A} \right)(\rho_\infty^m \otimes \tau) }_1
\le \sum_{Z \cap A \neq \emptyset} \int_0^t \norm{\mathcal M_Z \rho(s)}_1 \de s  +
\sum_{d > m+1} \norm{B_d^{\partial B}}_\diamond t
.\]
The second term on the \rhs is bounded by $t \poly(m) \nu^{-1}(m)$.
Let us focus on the first term on the \rhs
If $Z \subset A(m)$, we can bound the r.h.s with equation~\eqref{eq:lieb-rob}. In particular, we have the following
\[
\sum_{\substack{Z\cap A \neq \emptyset \\ Z\subset A(m)}} \norm{\mathcal M_Z \rho(s) }_1 \le
\poly(m) (e^{vs} +t - 1) \sum_{\substack{Z\cap A \neq \emptyset \\ Z\subset A(m)}} \norm{\mathcal  M_Z}_\diamond \abs{Z} \nu^{-1}(\dist(Z,R)).
\]

Observe that, since $Z \cap A \neq \emptyset$, it holds that $\dist(Z,R) + \diam Z \ge m$, and therefore
\[
\sum_{\substack{Z\cap A \neq \emptyset \\ Z\subset A(m)}} \norm{\mathcal  M_Z}_\diamond \abs{Z} \nu^{-1}(\dist(Z,R)) \le
\nu^{-1}(m) \sum_{\substack{Z\cap A \neq \emptyset \\ Z\subset A(m)}} \norm{\mathcal  M_Z}_\diamond \abs{Z}  \nu(\diam Z) \le \nu^{-1}(m) v \ ,
\]
where we have used the Lieb-Robinson assumption.

If $Z \not \subseteq A(m)$, then it must hold that $Z \cap R \neq \emptyset$. Then we showed in the previous lemma that:
\[ \sum_{\substack{Z\cap A \neq \emptyset \\ Z\cap R \neq \emptyset}} \norm{\mathcal M_Z}_1 \le v \abs{A} \nu^{-1}(\dist(A,R))
 = v \abs{A} \nu^{-1}(m) .\]
Putting it all together, we have that
\[ \norm{ \left( T_t^{\bar B} - T_t^{ \bar B \setminus A} \right)(\rho_\infty^m \otimes \tau) }_1
	\le \poly(m) \nu^{-1}(m) \left[ e^{vt} -1 + t \right] .
 \]
\end{proof}

\section{Decay of correlations}
\label{section:correlation-decay}
In this section we show that, as a straightforward consequence of Lemma~\ref{lemma:rapidmixing-localization}, the hypotheses on $\mathcal L$ imply that its fixed points have a particular character: they have fast decay of correlations, meaning that the correlations between two spatially separated regions is fast-decaying in distance. How fast this decay is is given by the decaying function $\Delta_0$ defined in Lemma~\ref{lemma:rapidmixing-localization}.

\begin{thm}
  Under the same assumption as in Lemma~\ref{lemma:rapidmixing-localization}, fix two regions $A$ and $B \subset \Lambda$, let $d_{AB} > 0$ be the distance between them.
  Then we have that
  \begin{equation}
  \label{eq:correlation-decay}
  T(A : B) \le 3 (\abs{A} + \abs{B})^\delta \Delta_0\left(\frac{d_{AB}}{2}\right) ,
  \end{equation}
  where the correlations are calculated with respect to $\rho_\infty$, and $\delta$ and $\Delta_0$ are defined
  in Lemma~\ref{lemma:rapidmixing-localization}.
\end{thm}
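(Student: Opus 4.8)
The plan is to feed Lemma~\ref{lemma:rapidmixing-localization} the combined region $C = A \cup B$ at localization radius $s = d_{AB}/2$, and then to observe that at exactly this radius the localized fixed point factorizes across the $A\mid B$ cut, so that the lemma already exhibits $\rho_{AB}$ as being close to a product state. Concretely, set $s = \tfrac{d_{AB}}{2}$, write $\rho_{AB} = \trace_{C^c}\rho_\infty$, and let $\rho_\infty^s$ be the unique fixed point of $T_t^{\bar C(s)}$. Applying Lemma~\ref{lemma:rapidmixing-localization} to $C$ (and using $\abs{C} = \abs{A}+\abs{B}$ since $A$ and $B$ are disjoint) gives immediately
\[
\norm{\rho_{AB} - \trace_{C^c}\rho_\infty^s}_1 \le (\abs{A}+\abs{B})^\delta\, \Delta_0\!\left(\frac{d_{AB}}{2}\right) =: \epsilon .
\]

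The key structural step is to identify $\trace_{C^c}\rho_\infty^s$ as a product state. Since $\dist(A,B) = d_{AB}$ and $s = d_{AB}/2$, the $s$-neighbourhoods of $A$ and $B$ are disjoint, so the region splits as a disjoint union of balls $C(s) = A(s) \sqcup B(s)$. In the uniform-family framework the closed-boundary evolution on a disjoint union of balls decouples into the independent evolutions on its components, $T_t^{\bar C(s)} = T_t^{\bar A(s)} \otimes T_t^{\bar B(s)}$, whence the unique fixed point factorizes as $\rho_\infty^s = \mu_A \otimes \mu_B$, with $\mu_A$ the fixed point of $T_t^{\bar A(s)}$ and $\mu_B$ that of $T_t^{\bar B(s)}$. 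Tracing out $C^c$ therefore yields a product state $\sigma_A \otimes \sigma_B$, where $\sigma_A = \trace_{A(s)\setminus A}\mu_A$ and $\sigma_B = \trace_{B(s)\setminus B}\mu_B$.

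It then remains to convert ``$\rho_{AB}$ is $\epsilon$-close to the product $\sigma_A \otimes \sigma_B$'' into a bound on $T(A:B) = \norm{\rho_{AB} - \rho_A \otimes \rho_B}_1$. Because the partial trace is trace-norm contractive, the bound descends to the marginals, $\norm{\rho_A - \sigma_A}_1 \le \epsilon$ and $\norm{\rho_B - \sigma_B}_1 \le \epsilon$. Multiplicativity of the trace norm under tensor products together with a triangle inequality gives $\norm{\rho_A \otimes \rho_B - \sigma_A \otimes \sigma_B}_1 \le \norm{\rho_A - \sigma_A}_1 + \norm{\rho_B - \sigma_B}_1 \le 2\epsilon$, and hence
\[
T(A:B) \le \norm{\rho_{AB} - \sigma_A \otimes \sigma_B}_1 + \norm{\sigma_A \otimes \sigma_B - \rho_A \otimes \rho_B}_1 \le \epsilon + 2\epsilon = 3\epsilon ,
\]
which is exactly the claimed inequality.

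I expect the only genuinely delicate point to be the factorization in the middle paragraph: it relies on the fact that, once $A(s)$ and $B(s)$ are disjoint, the uniform-family evolution on $C(s)$ carries no terms bridging the two components (the bulk terms $\mcl M_Z$ are assigned to a single component and the boundary conditions are attached ball-by-ball), so that the fixed point is genuinely a tensor product rather than merely approximately product. Everything downstream is a routine combination of Lemma~\ref{lemma:rapidmixing-localization} with the elementary stability of trace distance to a product state under partial traces and tensor products.
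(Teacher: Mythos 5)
Your proposal is correct and takes essentially the same approach as the paper's own proof: apply Lemma~\ref{lemma:rapidmixing-localization} to $C = A \cup B$ at radius $s = d_{AB}/2$, use that $C(s)$ splits into the disjoint components $A(s)$ and $B(s)$ so the localized fixed point factorizes as a product, and then combine contractivity of the partial trace with the triangle inequality to obtain the factor $3$. Your middle paragraph merely makes explicit the factorization step (boundary conditions attached ball-by-ball, uniqueness of the component fixed points) that the paper asserts in one line.
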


\begin{proof}
  Let $C = A \cup B$,
  and denote by $\rho_{AB}$ the reduced density matrix of $\rho_\infty$ over $C$, and by
  $\rho_A$ and $\rho_B$ the reduced state on $A$ and $B$, respectively.

  Consider $\rho^s_\infty$ the unique fix point of $T_t^{\overline{C(s)}}$, 
  and denote by $\rho_A^s$ and $\rho_B^s$ its reduced density matrices over $A$ and $B$, respectively.
  If $s \le \frac{d_{AB}}{2}$, then $C(s)$ has two disjoint components corresponding to $A(s)$ and $B(s)$,
  and thus $\rho^s_\infty$ decomposes as a tensor product over such bipartition, 
  and its reduced density matrix over $C$ is given by $\rho_A^s \otimes \rho^s_B$.

  By Lemma~\ref{lemma:rapidmixing-localization}, we have that, for any observable $O_C$ with operator norm equal to 1 and supported on $C$,
  \[ \abs{ \trace O_C (\rho_\infty - \rho^{s}_\infty) } \le \abs{C}^\delta \Delta_0(s) .\]
  This implies that
  \[ \norm{\rho_{AB} - \rho_A^s \otimes \rho^s_B }_1 \le \abs{C}^\delta \Delta_0(s).\]
  Since the trace norm does not increases under the partial trace, then
  \[ \norm{\rho_{A} - \rho^s_A }_1 \le \abs{C}^\delta \Delta_0(s) ,\]
  and the same holds for $B$. This in turn implies that
  \[ \norm{\rho_{A}\otimes\rho_{B} -  \rho^s_A \otimes  \rho^s_B}_1 \le 2 \abs{C}^\delta \Delta_0(s) ,\]
  and by applying the triangle inequality we obtain the desired result.
\end{proof}

\begin{obs}
  By equation~\eqref{eq:alicki-mutual}, we have that the mutual information $I(A:B)$ decays with $\dist(A,B)$ at essentially the same rate as $T(A:B)$.
\end{obs}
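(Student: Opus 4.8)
The plan is to substitute the trace-distance estimate of the preceding theorem into the non-linear Fannes--Audenaert/Alicki--Fannes bound \eqref{eq:fannes-mutual} and verify that the extra factors it introduces do not change the decay class of $\Delta_0$. Write $d_0 = \max_{x} \dim \mcl H_x$ for the (uniformly bounded) local dimension, so that $\log_2 d_A \le \abs{A} \log_2 d_0$. Let $\rho_{AB}$ be the reduced state of $\rho_\infty$ on $A \cup B$ and abbreviate $\tau := T(A:B)_{\rho_\infty}$, which by \eqref{eq:correlation-decay} satisfies $\tau \le 3(\abs{A}+\abs{B})^\delta \Delta_0(d_{AB}/2) =: \bar\tau$. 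For $d_{AB}$ large enough that $\bar\tau < 1$, inequality \eqref{eq:fannes-mutual} gives
\[ I(A:B) \le 6 \abs{A} (\log_2 d_0)\, \tau + 4 h_b(\tau). \]

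First I would dispose of the linear term. Since $\tau \le \bar\tau$ is a polynomial in $\abs{A},\abs{B}$ multiplied by $\Delta_0(d_{AB}/2)$, the quantity $6\abs{A}(\log_2 d_0)\tau$ is again a polynomial prefactor times $\Delta_0(d_{AB}/2)$, hence lies in exactly the same decay class as $T(A:B)$. The only term requiring care is the binary entropy. Here I would invoke the elementary estimate $h_b(x) \le x \log_2(1/x) + (\log_2 e)\, x$, valid for $x \in (0,1)$ (the second summand coming from $-(1-x)\log_2(1-x) \le x \log_2 e$), together with the fact that both $h_b$ and $x \mapsto x \log_2(1/x)$ are increasing near $0$. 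For $d_{AB}$ large enough that $\bar\tau \le 1/e$ this lets me replace $\tau$ by its upper bound, yielding
\[ h_b(\tau) \le \bar\tau \log_2(1/\bar\tau) + (\log_2 e)\, \bar\tau. \]

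The crux is then to control the factor $\log_2(1/\bar\tau)$. Expanding, $\log_2(1/\bar\tau) \le \log_2\!\big(1/\Delta_0(d_{AB}/2)\big) + O\!\big(\log(\abs{A}+\abs{B})\big)$, so the whole expression is $\Delta_0(d_{AB}/2)$ multiplied by $\log_2(1/\Delta_0(d_{AB}/2))$ up to polynomial-in-region-size factors. The key observation is that this logarithmic factor grows at most polynomially in the distance: if $\Delta_0$ is exponential then $\log_2(1/\Delta_0(s)) = O(s)$, a merely linear correction, while if $\Delta_0$ is polynomial then $\log_2(1/\Delta_0(s)) = O(\log s)$, an even milder one. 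In both cases multiplying $\Delta_0(d_{AB}/2)$ by such a factor leaves its decay class unchanged, which is the precise content of ``essentially the same rate''.

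I expect the sole (mild) obstacle to be the bookkeeping of the binary-entropy term, namely making explicit that the $\log(1/\tau)$ factor contributes a genuine but harmless polynomial-in-$d_{AB}$ slowdown. No analytic input is needed beyond the corollary derived from \eqref{eq:fannes-audenaert} and \eqref{eq:alicki-fannes} and the decay-of-correlations theorem above.
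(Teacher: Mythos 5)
Your proposal is correct and takes essentially the same approach as the paper, which presents the Observation as an immediate consequence of equation~\eqref{eq:alicki-mutual} (via the corollary~\eqref{eq:fannes-mutual}) applied to the bound~\eqref{eq:correlation-decay}; you have merely made explicit the intended bookkeeping, and your estimate $h_b(x) \le x\log_2(1/x) + (\log_2 e)\,x$ correctly shows that the non-linear term costs only a factor $\log_2\bigl(1/\Delta_0(d_{AB}/2)\bigr)$, which is linear in the distance for exponential $\Delta_0$ and logarithmic for polynomial $\Delta_0$, hence leaves the decay class unchanged.
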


Note that the dependence on $\abs{A}$ and $\abs{B}$ of the bound in equation~\eqref{eq:correlation-decay}, which is harmless when $A$ and $B$ are finite regions -- as in the case of two-point correlation functions -- becomes significant when one of the two regions is proportional to the system size. In \cite{PhysRevLett.100.070502} the authors defined a correlation length $\xi$ for the mutual information, as the minimal length such that, for all $L \ge \xi$, it holds that
\[ I(b_x(R-L) : b_x(R)^c ) \le \frac 1 2 I( b_x(R) : b_x(R)^c), \quad \forall x, \forall R \ge 0 .\]

From this property they are able to derive an area law for the mutual information of the type:
\[ I(A : A^c ) \le 4 \abs{\partial A} \xi .\]

If we were to use~\eqref{eq:correlation-decay} to determine a correlation length $\xi$, we would obtain a value for $\xi$ which depends on the system size, and thus obtain a bound comparable to equation~\eqref{eq:kastoryano-area-law} obtained in \cite{Kastoryano-exp-decay}. In the next section we want to show that it is possible to greatly improve this bound, at the cost of adding some extra hypotheses on the evolution.


\section{Area law for mutual information}
\label{section:correlation-arealaw}
Until now, we have avoided as much as possible to put restrictions on the function $\nu(r)$ appearing in Assumptions \eqref{eq:assumption-a1} and \eqref{eq:assumption-a2}, and we have simply required it to be fast enough for Lieb-Robinson bounds to hold. Indeed, in most of the bounds obtained in the previous results it appears $\nu(r)$ (or $\Delta_0(r)$, which depends on $\nu(r)$), so that weaker assumptions will simply lead to weaker bounds.

This gets more complicated when it gets to prove the area law bounds, since we are actually interested in pinning down the case in which the bound on mutual information takes the form of $\abs{\partial A} \log \abs{A}$ - a area law with a logarithmic correction. As will be clear in the proof, the logarithmic correction depends on $\nu(r)$ (and consequently $\Delta_0(r)$) to be exponential. Slower rates will still lead to a bound on the mutual information, but where the logarithmic correction is replaced by a super-logarithmic correction. The resulting bound can hardly be called an area law - a sub-volume law would be more correct. We will avoid such generalization, as they make the proof unnecessarily  complicated, and focus on the more interesting case of exponentially decaying interactions, for which we can prove a proper area law with logarithmic correction.

\subsection{The pure fixed point case}
In this section, let us suppose that each $T^{\bar \Lambda}_t$ has a unique fixed point, and moreover that this fixed point is pure. This setting is of particular interest in view of dissipative state engineering \cite{verstraete09,Kraus08}, since ideally one would like to be able to create pure states (at least before noise and errors are taken into account).

Let us denote by $\ket{\phi_\Lambda}$ the pure fixed point of $T^{\bar \Lambda}_t$.

\begin{prop}
Let $\mcl L = \{ \mcl M, \mcl B\}$ be a uniform family of dissipative evolutions,
satisfying rapid mixing and having a unique pure fixed point $\ket{\phi_\Lambda}$ for each $\Lambda$.
Fix $A\subset \Lambda$, and let $\rho_A$ denote the reduced density matrix of $\ket{\phi_\Lambda}$ on $A$.
Then it holds that
\[ S(\rho_A) \le c \log \abs{A} \cdot \abs{\partial A} \]
for some constant $c>0$.
\end{prop}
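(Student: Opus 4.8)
The plan is to exploit the purity of $\ket{\phi_\Lambda}$ together with the localization result of Lemma~\ref{lemma:rapidmixing-localization} to show that $\rho_A$ is trace-close to the reduced state of a fixed point living on a thickened region $A(s)$, and then bound the entropy via the Fannes--Audenaert inequality \eqref{eq:fannes-audenaert}. The key observation is that for a \emph{pure} state the entanglement entropy $S(\rho_A)$ is bounded by $\log$ of the rank (Schmidt number) of $\rho_A$, so the strategy will be to control $S(\rho_A)$ through a dimension that scales with the boundary rather than the volume of $A$.

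First I would apply Lemma~\ref{lemma:rapidmixing-localization} to compare $\rho_A = \trace_{A^c}\ketbra{\phi_\Lambda}{\phi_\Lambda}$ with $\trace_{A^c}\rho_\infty^s$, obtaining
\begin{equation}
\label{eq:plan-loc}
\norm{\rho_A - \trace_{A^c}\rho_\infty^s}_1 \le \abs{A}^\delta \Delta_0(s),
\end{equation}
where $\Delta_0(s)$ decays exponentially in $s$ (the exponential regime being exactly the case this section restricts to). The point of \eqref{eq:plan-loc} is that $\rho_\infty^s$ is the fixed point of the evolution on the thickened region $A(s)$, so $\trace_{A^c}\rho_\infty^s$ is effectively supported on the collar $A(s)\setminus A$ together with $A$; crucially, by purity and the structure of $A(s)$ the effective rank of the state on $A$ that matters is governed by the Hilbert space dimension of a boundary shell of width $s$, which scales like $d^{\,\abs{\partial A}\, s}$ for local dimension $d$. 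This gives $\log(\mathrm{rank}) \lesssim \abs{\partial A}\, s$.

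Next I would use the Fannes--Audenaert bound to turn the trace-distance estimate \eqref{eq:plan-loc} into an entropy estimate. Writing $\delta_s = \abs{A}^\delta \Delta_0(s)$ and applying \eqref{eq:fannes-audenaert} with $d$ the relevant dimension, one gets roughly
\begin{equation}
\label{eq:plan-fannes}
S(\rho_A) \le S(\trace_{A^c}\rho_\infty^s) + 2\delta_s \log_2 d + 2 h_b(\delta_s) \le c\,\abs{\partial A}\, s + 2\delta_s\,\abs{\partial A}\, s + 2 h_b(\delta_s).
\end{equation}
The combination of the two scales is the heart of the argument: the ``clean'' term $c\,\abs{\partial A}\,s$ grows \emph{linearly} in the thickening $s$, while the error term carries a factor $\delta_s = \abs{A}^\delta\Delta_0(s)$ that decays \emph{exponentially} in $s$. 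Choosing $s$ of order $\log\abs{A}$ makes $\delta_s$ polynomially small (in fact $\delta_s \to 0$), so the error terms stay bounded, while the main term becomes $c\,\abs{\partial A}\log\abs{A}$, yielding exactly the claimed bound $S(\rho_A)\le c\log\abs{A}\cdot\abs{\partial A}$.

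The main obstacle I expect is making rigorous the claim that $S(\trace_{A^c}\rho_\infty^s)$ is itself bounded by $c\,\abs{\partial A}\,s$ rather than by the volume. Purity of $\ket{\phi_\Lambda}$ is essential here but does not immediately transfer to $\rho_\infty^s$, which need not be pure; the correct statement is that the \emph{approximation} $\trace_{A^c}\rho_\infty^s$ has support (Schmidt rank across the $A$ boundary) controlled by the boundary shell, so its entropy is at most the log of that rank. Establishing this dimension-counting cleanly --- ensuring that only a collar of width $s$ contributes to the rank and that the factor of $s$ appears linearly rather than being amplified --- is the delicate step, and it is precisely where the exponential decay of $\Delta_0$ (forced by the exponential $\nu$) is needed to keep $s = O(\log\abs{A})$ and hence obtain a genuine logarithmic correction rather than a worse sub-volume term.
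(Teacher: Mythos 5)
Your overall route is the paper's own: compare $\rho_A$ with the reduction of the fixed point on the thickened region $A(\ell)$ via Lemma~\ref{lemma:rapidmixing-localization}, convert the trace-distance estimate into an entropy estimate via Fannes--Audenaert, and take $\ell$ of order $\log\abs{A}$ using the exponential decay of $\Delta_0$. However, the step you flag as the ``main obstacle'' --- bounding $S(\trace_{A^c}\rho_\infty^s)$ by $c\,\abs{\partial A}\,s$ --- is left genuinely open in your writeup, and the resolution you sketch (a Schmidt-rank count ``across the $A$ boundary'' for a possibly mixed $\rho_\infty^s$) does not work: if $\rho_\infty^s$ were mixed there is no such rank bound at all (the maximally mixed state on $A(s)$ reduces to the maximally mixed state on $A$, with volume-law entropy). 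The missing observation is that the hypothesis grants purity for \emph{every} region of the uniform family, not just for $\Lambda$: the unique fixed point of $T_t^{\overline{A(s)}}$ is the pure state $\ket{\phi_{A(s)}}$, so $\rho_\infty^s$ \emph{is} pure. Then, since for a pure bipartite state the two marginal entropies coincide, $S(\trace_{A^c}\rho_\infty^s) = S(\trace_{A}\rho_\infty^s) \le \log_2 \dim \mcl H_{A(s)\setminus A} \le c_0\, s\, \abs{\partial A}$. This is exactly how the paper obtains the collar bound; no delicate rank-counting is needed, and your worry that purity ``does not immediately transfer to $\rho_\infty^s$'' dissolves once the hypothesis is read as applying to all $\Lambda$, in particular to $A(s)$.

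A second, more minor slip: in your second display you instantiate Fannes--Audenaert with the collar dimension, writing the error term as $2\delta_s\,\abs{\partial A}\,s$. Inequality~\eqref{eq:fannes-audenaert} compares two states on the \emph{same} Hilbert space, here $\mcl H_A$, so the dimension entering the bound is $\dim \mcl H_A$ and the error term is of order $\delta_s\,\abs{A}$ (the paper indeed writes $2d\,\abs{A}$). This does not derail the argument --- one simply chooses $\ell$ so that $\abs{A}^{\delta+1}\Delta_0(\ell)\le \epsilon$, which still gives $\ell = O(\log\abs{A})$ when $\Delta_0$ decays exponentially --- but as written your error bound is not what Fannes--Audenaert yields.
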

\begin{proof}
Let $\ell \ge 0$, to be determined later, and denote by $\sigma$ the reduced density matrix of $\ket{\phi_{A(\ell)}}$ on $A$.
Then we have trivially that
\[ S(\sigma) \le \log_2 \dim \mathcal H_{A(\ell) \setminus A} \le c_0\, \ell\, \abs{\partial A}  ,\]
for some positive constant $c_0$.
On the other hand, by lemma~\ref{lemma:rapidmixing-localization}, we have that $d := \norm{\rho_A - \sigma}_1 \le \abs{A}^\delta \Delta_0(\ell)$,
and thus by equation~\eqref{eq:fannes-audenaert}

\[\begin{split}
 S(\rho_A) \le S(\sigma) + \abs{S(\sigma) - S(\rho_A)} \le \\
c_0\, \ell\, \abs{\partial A} + 2 d \abs{A} + 2 h_b(d) \\
\le c_0\, \ell\, \abs{\partial A}  + 2 \abs{A}^{\delta+1} \Delta_0(\ell) + 2 h_b(d)
.
\end{split}\]
Fix a $\epsilon$ such that $0 < \epsilon < 1/2$ and
let us choose $\ell$ such that $ 2 \abs{A}^{\delta+1} \Delta_0(\ell) \le \epsilon$. This implies that $\ell$ scales as $\log \abs{A}$, and thus 
$ S(\rho_A) \le c_1 \log \abs{A}\cdot \abs{\partial A} + \epsilon + 2 h_b(\epsilon/2\abs{A})$,
for some positive $c_1$. By taking $c \ge c_1$ we can absorb the terms depending on $\epsilon$ in the other one, and obtain the claimed estimate $S(\rho_A) \le c \log \abs{A}\cdot \abs{\partial A} $.
\end{proof}

While of interest, the case of a pure fixed point is a very specific one. Therefore, we want to give results applicable in the generic case of a mixed fixed point. To obtain such results we will need to make an additional assumption, namely that the system is frustration free.

\subsection{The frustration-free case}
\begin{thm}[Area law for mutual information]
Let $\mcl L = \{ \mcl M, \mcl B\}$ be a uniform family of dissipative evolutions,
satisfying rapid mixing, frustration freeness and having a unique pure fixed point.
Let $\rho_\infty$ be the fixed point of $\mathcal L^{\bar \Lambda}$ for some $\Lambda = b_u(L) \subset \Gamma$.
Then we have that
\begin{equation}
I(A:A^c)_{\rho_\infty} \le c \Abs{\partial A} \log \abs{A}
\end{equation}
for some positive $c$ independent of the system-size.
\end{thm}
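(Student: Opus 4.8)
The plan is to reproduce the structure of the pure-state Proposition, but with the single-region entropy $S(\rho_A)$ replaced by the full mutual information, which I localize to a thin shell around $A$. Fix a width $\ell\ge 0$ (to be chosen of order $\log\abs A$), set $S = A(\ell)\setminus A$ and $F = A(\ell)^c$, so that $A^c = S\cup F$, and start from the chain rule
\[ I(A:A^c)_{\rho_\infty} = I(A:S)_{\rho_\infty} + I(A:F\mid S)_{\rho_\infty}. \]
The first term is controlled by the shell alone: the Araki--Lieb inequality gives $I(A:S)\le 2S(\rho_S)\le 2\log_2\dim\mcl H_S$, and since $S$ is a shell of width $\ell$ its volume is $O(\abs{\partial A}\,\ell)$, whence $I(A:S)\le c_0\,\ell\,\abs{\partial A}$. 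Choosing $\ell = c_1\log\abs A$ already produces the advertised main term $c\,\abs{\partial A}\log\abs A$. The entire difficulty is therefore to show that the conditional term $I(A:F\mid S)$ is negligible compared with this, \emph{with an estimate that does not grow with the system size}.

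Second, I would bound $I(A:F\mid S)$ by comparing $\rho_\infty$ to a reference state $\omega$ for which $A$ and $F$ are exactly decoupled given $S$ (a Markov state, so $I(A:F\mid S)_\omega = 0$), and transferring via continuity. Writing $I(A:F\mid S) = S(A\mid S) - S(A\mid SF)$ and applying the Alicki--Fannes bound~\eqref{eq:alicki-fannes} to each conditional entropy yields
\[ I(A:F\mid S)_{\rho_\infty}\le 8\,\epsilon\,\log_2 d_A + 4\, h_b(\epsilon), \]
where $\epsilon = \norm{\rho_\infty - \omega}_1$ and $d_A = \dim\mcl H_A$. The crucial point is that the dimension factor is now $\log_2 d_A = O(\abs A)$ (the volume of $A$, not of the whole system), so it suffices to produce $\omega$ with, say, $\epsilon\le\abs A^{-2}$; then $\epsilon\log_2 d_A = O(\abs A^{-1})$ is dominated by the shell term.

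Third -- and this is where frustration freeness is essential -- I would build $\omega$ and estimate $\epsilon$ from the localization Lemma~\ref{lemma:localization} together with Lemma~\ref{lemma:rapidmixing-localization}, rather than from the decay of correlations~\eqref{eq:correlation-decay}. The latter cannot be used here: its right-hand side carries a factor $(\abs A + \abs F)^\delta$ with $\abs F$ of order the system size, so it would reintroduce exactly the system-size dependence we are removing, recovering only a bound of the type~\eqref{eq:kastoryano-area-law}. By contrast, the right-hand side of Lemma~\ref{lemma:localization}, namely $\poly(m)\nu^{-1}(m)[e^{vt}-1+t]$, is manifestly independent of $\abs\Lambda$. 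Concretely I would run that lemma with $m\sim\ell$ on $B = A(\ell)$, prepare the $B$-fixed point from $\rho_\infty^m\otimes\tau$ by the $A$-avoiding evolution $T_t^{\bar B\setminus A}$, and balance the growing factor $e^{vt}$ against the rapid-mixing decay $\abs B^\delta e^{-\gamma t}$ of Definition~\ref{def:rapid-mixing} at a time $t$ with $\tfrac{\delta}{\gamma}\log\abs B\lesssim t\lesssim\tfrac{\mu}{v}\,\ell$. This window is nonempty precisely when $\ell\gtrsim\log\abs A$, which is the structural origin of the logarithmic correction; chaining with Lemma~\ref{lemma:rapidmixing-localization} to pass from the $B$-fixed point to the marginals of the global $\rho_\infty$ then yields a shell-decoupled $\omega$ with $\epsilon$ exponentially small in $\ell$ and independent of the system size.

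I expect the main obstacle to be exactly this third step: producing the Markov reference $\omega$ with a genuinely system-size-independent trace-distance error. The tension is that an $\omega$ which is \emph{exactly} product across $A(\ell):A(\ell)^c$ already sits at trace distance $\ge T(A(\ell):A(\ell)^c)$ from $\rho_\infty$, a quantity one naively controls only through the system-size-dependent bound~\eqref{eq:correlation-decay}. Circumventing this forces one to use frustration freeness to reconstruct the far region $F$ from the shell $S$ by a \emph{local} channel, so that only the shell-localized, $\abs\Lambda$-independent error of Lemma~\ref{lemma:localization} feeds into the estimate. Arranging the conditional-entropy continuity so that it sees only this local error -- and in particular only the dimension $d_A$ and not $d_F$ -- is the delicate part of the argument.
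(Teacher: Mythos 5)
Your reductions are sound as far as they go: the chain rule $I(A:A^c)=I(A:S)+I(A:F\mid S)$, the Araki--Lieb bound $I(A:S)\le 2S(\rho_S)\le c_0\,\ell\,\abs{\partial A}$, the two applications of \eqref{eq:alicki-fannes} costing only $\log_2 d_A$, and your diagnosis of why \eqref{eq:correlation-decay} is unusable are all correct. But the gap you flag in your third step is genuine, and your proposed circumvention does not close it. Lemma~\ref{lemma:localization} is a \emph{one-annulus} statement: run with $m\sim\ell$ and balanced against rapid mixing on $A(\ell+1)$, it only relates the fixed point on $A(\ell)$ to the fixed point on $A(\ell+1)$; it produces no state close to the \emph{global} $\rho_\infty$. ``Chaining with Lemma~\ref{lemma:rapidmixing-localization}'' cannot supply that either: that lemma bounds $\norm{\trace_{A^c}(\rho_\infty-\rho_\infty^s)}_1$, i.e.\ closeness of marginals \emph{on $A$ only}, whereas your continuity estimate for $S(A\mid SF)$ needs $\norm{\rho_\infty-\omega}_1$ on the full system $ASF$. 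Bridging from $A(\ell)$ to $\Lambda$ in a single shot fails for the reason you half-anticipate: rapid mixing on $\bar\Lambda$ requires $t\gtrsim(\delta/\gamma)\log\abs{\Lambda}$, and the localization error $e^{vt}\nu^{-1}(\ell)$ then forces $\ell\gtrsim\log\abs{\Lambda}$, reintroducing exactly the system-size dependence of \eqref{eq:kastoryano-area-law}. Moreover, any $\omega$ built by evolving $\rho_\infty^{\ell}\otimes\tau$ with $A$-avoiding channels acts across the $S{:}F$ cut, so it is not a Markov state and $I(A:F\mid S)_\omega=0$ would not hold anyway.

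The missing idea is to \emph{iterate the one-annulus step over all scales}, which is what the paper does. It telescopes $I(A:A^c)_{\rho_\infty}=I(A:A^c)_{\rho_\infty^{n_0}}+\sum_{n=n_0}^{L-1}\bigl[I(A:A^c)_{\rho_\infty^{n+1}}-I(A:A^c)_{\rho_\infty^{n}}\bigr]$; at scale $n$ it balances rapid mixing on $A(n+1)$ (whose size is $\abs{A}\poly(n)$, so only $\log\abs{A}+\log\poly(n)$ enters the time) against \eqref{eq:localizing} at distance $n$, choosing $t_n\sim n$ so that $\epsilon_n\abs{A}^2\le 2^{-n}$ for all $n\ge n_0\sim\log\abs{A}$. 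Since $T_{t_n}^{\bar A(n+1)\setminus A}$ acts only on $A^c$, data processing gives $I(A:A^c)$ of the evolved state at most $I(A:A^c)_{\rho_\infty^{n}}$, and \eqref{eq:alicki-mutual} converts the trace-norm error into an increment $6\epsilon_n\abs{A}^2+4h_b(\epsilon_n\abs{A})$, summable geometrically; the base term is bounded by the dimension of $\mcl H_{A(n_0)\setminus A}$, giving $c\,\abs{\partial A}\log\abs{A}$. Note that this iteration makes your Markov/chain-rule machinery unnecessary: one never needs a reference with vanishing conditional mutual information, only monotonicity under $A$-avoiding channels. Your scheme can in fact be repaired in this spirit --- compose the $A$-avoiding evolutions over \emph{all} annuli $n_0\le n<L$ starting from $\rho_\infty^{n_0}\otimes\tau$, accumulate the trace-norm errors $\sum_n\abs{A}\epsilon_n$ through the contractive channels, then apply data processing and \eqref{eq:alicki-mutual} once --- but that repair is precisely the telescoping construction your one-shot proposal lacks.
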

\begin{proof}
For each $n \ge 0$, let $\rho_\infty^n$ be the fixed point of $T_t^{\bar A(n)}$. Fix a positive $n_0$ to be determined later.
Then it holds that
\begin{multline}
\label{eq:area-law-proof}
 I(A:A^c)_{\rho_\infty} = \\
 I(A:A^c)_{\rho_\infty^{n_0}} + \sum_{n = n_0}^{L-1} \left [ I(A:A^c)_{\rho^{n+1}_\infty} - I(A:A^c)_{\rho^n_\infty} \right ].
\end{multline}
We want to show that it is possible to choose $n_0$ in such a way that $I(A:A^c)_{\rho_\infty^{n_0}} \le c \Abs{\partial A} \log \abs{A}$ and
the sum in the \rhs is arbitrarily small.

For each $n \ge 0$, we have that, by rapid mixing~\eqref{eq:rapid-mixing}
\[ \norm{\rho_\infty^{n+1} - T_t^{\bar A(n+1)}(\rho_\infty^n \otimes \tau)}_1 \le \abs{A} \phi_1(n) e^{-\gamma t} ,\]
where $\phi_1(n)$ is a polynomial in $n$.
On the other hand, equation~\eqref{eq:localizing} implies that
\[ \norm{ T_t^{\bar A(n+1)}(\rho_\infty^n \otimes \tau) - T_t^{\bar A(n+1) \setminus A}(\rho_\infty^n \otimes \tau) }_1 \le \abs{A} \phi_2(n)\nu^{-1}(n) [ e^{vt} + t - 1]  ,\]

and $\phi_2(n)$ is polynomial in $n$.
Let us choose choose $t_n$ such that
\[  \epsilon_n := \phi_1(n) e^{-\gamma t_n} + \phi_2(n) \nu^{-1}(n) [e^{vt_n } + t_n -1 ] \]
is exponentially decaying in $n$.
This can be done by taking $t_n$ which scales proportionally to
$\frac{1}{v+\gamma} \log \left( \nu(n) \frac{\phi_1(n)}{\phi_2(n)}\right)$,
which is essentially linear if $\nu(n)$ grows exponentially.
We can put the two bounds together using the triangle inequality, in such a way that
\[ \norm{\rho_\infty^{n+1} - T_{t_n}^{\bar A(n+1) \setminus A}(\rho_\infty^n \otimes \tau)}_1 \le \abs{A} \epsilon_n .\]
Observe that, since $T_t^{\bar A(n+1) \setminus A}$ does not act on $A$,
$ I(A:A^c)_{T_t^{\bar A(n+1) \setminus A}(\rho_\infty^n \otimes \tau)} \le I(A:A^c)_{\rho^n_\infty} .$

Let us assume that $n_0$ is big enough so that $\epsilon_n \abs{A}^2 \le 2^{-n}$ for all $n \ge n_0$,
we can then apply inequality~\eqref{eq:alicki-mutual} and obtain

\begin{multline*}
I(A:A^c)_{\rho_\infty^{n+1}} - I(A:A^c)_{\rho_\infty^n} \le \\ \le
I(A:A^c)_{\rho_\infty^{n+1}} - I(A:A^c)_{T_t^{\bar A(n+1) \setminus A}(\rho_\infty^n \otimes \tau)}  \le \\
 6 \epsilon_n \abs{A}^2 + 4 h_b(\epsilon_n\abs{A}) .
\end{multline*}
Then
\[ I(A:A^c)_{\rho_\infty} \le I(A:A^c)_{\rho_\infty^{n_0}} + 6 \sum_{n = n_0}^{L-1} 2^{-n} + 4 \sum_{n = n_0}^{L-1} h_b(\epsilon_n \abs{A}) .\]

Observe that, if $0\le x \le 1/e$, then $(x-1) \log_2(1-x) \le -x\log_2 x$,
and the latter is a increasing function in that interval. Therefore:
\[ h_b(\epsilon_n \abs{A}) \le - 2 \epsilon_n \abs{A} \log_2 (\epsilon_n \abs{A}) \le
\frac{2^{-n+1}}{\abs{A}} \left(n+\log_2 \abs{A}  \right) .\]
Therefore, $\sum_{n = n_0}^{L-1} h_b(\epsilon_n \abs{A})$ is the tail of a series which is converging geometrically, and therefore is exponentially decaying as $n_0$ increases. The same is true for $\sum_{n = n_0}^{L-1} \lambda^n$, so that both of them can be made smaller than $I(A:A^c)_{\rho_\infty^{n_0}}$.

By taking $n_0$ proportional to $\log \abs{A}$, we can bound $I(A:A^c)_{\rho_\infty^{n_0}}$ by the logarithm of $\dim \mcl H_{A^c}$, which
is proportional to $\abs{A(n_0)\setminus A}$, and therefore:
\[  I(A:A^c)_{\rho_\infty^{n_0}} \le c\, \abs{\partial A}\log \abs{A} .\]
In conclusion, we have bounded the \rhs of~\eqref{eq:area-law-proof} by  $c \abs{\partial A}\log \abs{A}$, and this concludes the proof.
\end{proof}


	\subparagraph*{Acknowledgements}
	\small
	F.\,G.\,S.\,L.\,B.\ is supported by EPSRC.
	T.\,S.\,C.\ is supported by the Royal Society.
	A.\,L.\ is supported by MINECO FPI fellowship BES-2012-052404.
	A.\,L.\ and D.\,P.-G.\ acknowledge support from  MINECO (grant MTM2011-26912), Comunidad de Madrid (grant QUITEMAD+-CM, ref. S2013/ICE-2801) and the European CHIST-ERA project CQC (funded partially by MINECO grant PRI-PIMCHI-2011-1071).
	This work was made possible through the support of grant \#48322 from the John Templeton Foundation. The opinions expressed in this publication are those of the authors and do not necessarily reflect the views of the John Templeton Foundation.
	This project has received funding from the European Research Council (ERC) under the European Union’s Horizon 2020 research and innovation programme (grant agreement No 648913).
	S.\,M.\ acknowledges funding provided by IQIM, an NSF Physics Frontiers Center with support of the Gordon and Betty Moore Foundation through Grant \#GBMF1250, and AFOSR Grant \#FA8750-12-2-0308.
	The authors would like to thank the hospitality of the Isaac Newton Institute for Mathematical Sciences, where part of this work was carried out.
	
	\printbibliography
	
\end{document}